\documentclass[a4paper,11pt,leqno]{amsart}

\topmargin -10mm
\textwidth 16cm
\textheight 24cm
\hoffset -15mm

\usepackage{natbib}
\usepackage{graphicx}
\usepackage{epstopdf}
\usepackage{amsmath,amssymb}
\usepackage[hyphens]{url}
\usepackage{enumerate}

\numberwithin{equation}{section}

\newcommand{\rd}{\textrm{d}}
\newcommand{\dt}{\textrm{d}t}
\newcommand{\ds}{\textrm{d}s}

\newcommand{\dW}{\textrm{d}W_t}
\newcommand{\dWs}{\textrm{d}W_s}
\newcommand{\ex}[2][]{\mathbb{E}^{#1}\left[#2\right]}
\newcommand{\ec}[3][]{\mathbb{E}^{#1}\left[\left.#2\right|#3\right]}

\newcommand{\Ft}{\mathcal{F}_t}
\newcommand{\FT}{\mathcal{F}_T}
\newcommand{\half}{\frac{1}{2}}
\newcommand{\ind}[1]{\mathbf{1}_{\left\{ #1\right\}} }

\newcommand\bstar{\begin{eqnarray*}}
\newcommand\estar{\end{eqnarray*}}
\newcommand\be{\begin{equation}}
\newcommand\ee{\end{equation}}
\newcommand\bea{\begin{eqnarray}}
\newcommand\eea{\end{eqnarray}}

\newcommand{\vectFunc}[1]{\mathfrak{#1}}

\DeclareMathOperator*{\argmax}{arg\,max}
\DeclareMathOperator*{\argmin}{arg\,min}
\DeclareMathOperator{\diag}{diag}

 \newtheorem{proposition}{Proposition}
 \newtheorem{lemma}[proposition]{Lemma}
 \newtheorem{definition}[proposition]{Definition}
 \newtheorem{theorem}[proposition]{Theorem}

 \numberwithin{proposition}{section}

\begin{document}
\title{Risk-sensitive investment in a finite-factor model}
\author[G. Andruskiewicz, M. Davis \& S. Lleo]{Grzegorz~Andruszkiewicz, Mark~H.~A.~Davis and S\'ebastien~Lleo}
\address{Andruskiewicz, Davis: Department of Mathematics, Imperial College London SW7 2AZ, UK}
\email{gandrusz@gmail.com, mark.davis@imperial.ac.uk}
\address{Lleo: NEOMA Business School, 51100 Reims, France}
\email{lleo@neoma-bs.fr}

\date{\today}
\maketitle
\begin{abstract}
A new jump diffusion regime-switching model is introduced, which allows for linking jumps in asset prices
with regime changes. We prove the existence and uniqueness of the solution to the risk-sensitive asset management
criterion maximisation problem in this setting. We provide an ODE for the optimal value function, 
which may be efficiently solved numerically. Relevant probability measure changes are discussed in the appendix.
The recently introduced approach of Klebaner and Liptser (2013) is used to prove the martingale property of the relevant density processes.
\end{abstract}

\section{Introduction}\label{sect:intro}
Recently, especially after the the latest credit crisis,
many hedge funds and portfolio managers have been taking interest in improved
modelling of asset returns in the market. They use the distribution of returns
to derive trading strategies that optimise the trade-off between
high expected returns and volatility.
Fund managers always have a view of the market, but need  tools to optimise
their portfolios. They constantly follow market conditions and regularly recalibrate  model parameters and revise asset allocations. 

In this paper we introduce a new regime-switching factor model for asset prices. There are a finite number of regimes and within each regime the asset processes follow jump diffusions. Regime switching follows an autonomous continuous-time Markov chain. Many other authors, starting with \cite{merton1976} and including \cite{br2004} and \cite{sc2009}, have used Markov chain factor models, but a critical new feature here is that a change of regime may coincide with a jump in asset prices. This is important because it forces the investor to hedge against possible regime changes. In our model, as in most others, there is frictionless trading, so if asset prices never jump at a regime shift the investor can simply wait for the shift to happen and then rebalance instantaneously, whereas if the shift is accompanied by price jumps then some defensive measures must be taken in advance. As an example, it is clearly unreasonable to suppose that it would have been possible to rebalance a portfolio between the recent announcement that CHF will cease to be pegged against EUR and the consequent jump in the foreign exchange market.

After defining the market model we optimise a risk-sensitive criterion in the given setting.  For optimal investment over a finite time horizon $[0,T]$ the risk-sensitive criterion is
\be J_\theta (v_0,h,T)=-\frac{1}{\theta}\log\mathbb{E}\left[ e^{-\theta \log V(v_0,h,T)}\right]\label{rsc}\ee
where $v_0$ is the initial capital, $V(v_0,h,T)$ is the portfolio value at time $T$ resulting from an investment strategy $h$ and $\theta$ is a risk-aversion parameter; we always take $\theta\geq 0$. In all standard models, including the ones here, $V(v_0,h,T)=v_0V(1,h,T)$, so
\[ J_\theta (v_0,h,T)=\log v_0-\frac{1}{\theta}\log\mathbb{E}\left[ e^{-\theta \log V(1,h,T)}\right].\]
This shows that optimisation does not depend on $v_0$, and we can and will harmlessly normalise to $v_0=1$. Taking a formal Taylor expansion of $J_\theta$ around $\theta=0$ gives
\[ J_\theta (v_0,h,T)=\mathbb{E}[\log V(T)]-\frac{\theta}{2}\mathrm{var}[\log V(T)]+O(\theta^2).\]
If we define the realised growth rate, or return, $R_T$ by $V(T)=e^{R_TT}$ this becomes
\be\frac{1}{T}J_\theta (v_0,h,T)=\mathbb{E}[R(T)]-\frac{\theta T}{2}\mathrm{var}[R_T]+O(\theta^2T^2).\label{lagrange}\ee
Ignoring for a moment the remainder term, we can regard $\theta T/2$ as a Lagrange multiplier for the optimisation problem of maximising the expected return subject to a constraint on the variance of the return---a continuous-time
equivalent of mean-variance analysis introduced by \cite{mar52}, which is still the dominant technique in the market.
 A further interpretation of \eqref{rsc} is to note that $\exp(-\theta\log V)=V^{-\theta}$, so maximising $J_\theta(1,h,T)$ is equivalent to minimising $\mathbb{E}[V(T)^{-\theta}]$, i.e. maximising expected utility for the power utility function $x^{\gamma}/\gamma$ with $\gamma=-\theta$.

 Application of risk-sensitive control to asset management problems was pioneered by  \cite{bielecki2003} and a considerable literature has developed since then. References can be found in \cite{mdsl2014}, and one of the models studied there and in the paper \cite{davis2013} includes jumps in both
the assets and the factors; however, for technical reason we were obliged to exclude simultaneous asset and factor jumps, so the model presented here is not a special case.

The closest results to ours are in  \cite{frey2012}, where a standard regime-switching model is assumed.
But in their model there are no jumps in the asset prices, instead the authors assume that 
the regimes are not observable and include stochastic filtering in their analysis. 
In contrast, we assume that the investor knows which state the system is currently in---in fact we believe this 
information is part of the investor's view of the market---together with all the other parameter values of the model.
The question of how to determine these values is outside of the scope of this article---and indeed this
is the very skill that allows the investors to generate \emph{alpha}. Some further remarks will be found in Section \ref{sect:conc} below.
\cite{bielecki1999a} consider a Markov chain model, which captures explicitly the correlation between 
the state of the economy and the returns of the assets. Their technical approach is different to the one presented in this paper, being formulated in a discrete time setting.

The paper is laid out as follows. In Section \ref{sect:market} we formulate our basic model for factors and asset prices, 
and in Section \ref{sect:criterion} we discuss the risk-sensitive criterion and the set of admissible investment strategies. 
The main results of the paper are contained in Section \ref{sect:trick} where we study the HJB equation. 
Because of the finite-state factor specification, this is an \emph{ordinary differential equation}, 
as opposed to the partial integro-differential equations that arise in jump-diffusion factor models such as those in \cite{davis2013}, 
and this is the major advantage of our approach. We establish existence and uniqueness in Theorem \ref{theorem:control}. 
 In Section \ref{sect:mft} we consider the relationship between risk-sensitive optimal strategies and the Kelly `growth-optimal' strategy.  Further remarks on application of the model will be found in the concluding section, Section \ref{sect:conc}.

As will be seen, our whole approach is based on measure changes and the Dol\'eans-Dade (generalized Girsanov) exponential martingale. We cover the required information in two Appendices, 
the second of which is, we believe, the first application in an applied context of a new approach to showing that the stochastic exponential has expectation 1, due to  \cite{klebaner2011}.

\section{Market}\label{sect:market}
Our model is constructed on a probability space $(\Omega,\mathcal{G},\mathbb{P})$ carrying the following three independent objects, which are specified in detail below.

(i) An $N$-state continuous-time homogeneous Markov chain $x_t, t\in[0,T]$;

(ii) A Brownian motion $W_t, t\in[0,T]$ in $\mathbb{R}^m$;

(iii) A sequence $U_1,U_2,\ldots$ of i.i.d. random variables, where $U_1$ is uniformly distributed on $[0,1]^m$.

\noindent We propose a factor-based model for asset prices on a fixed time interval $[0, T]$. The factor process is the Markov chain $\{x_t\}$, with states in $\mathcal{N}=\{1, \dots, N\}$ 
and generator $Q$ in the real-world probability measure $\mathbb{P}$. 
It is convenient to identify the process $x_t$
with a  process $X_t\in \mathbb{R}^N$ where $X_t=e_k$, the $k$-th unit coordinate vector, when $x_t=k$.
Note that the jumps of $X_t$ arrive according to the state-dependent Poisson process $\{\Lambda_t\}$ with intensity $\lambda(X_t)$ at
time $t$, where from the theory of Markov chains the jump intensity is defined by the generator matrix: $\lambda(i)=-Q_{ii}$ for every $i$.

First let $\mathfrak{F}$ be the class of density functions $f:\mathbb{R}^m\times\mathcal{N}\times\mathcal{N}\rightarrow \mathbb{R}$  of asset jump sizes
that satisfy the following conditions:
\begin{enumerate}[(i)]
 \item $f(\cdot, i, j)$ is a density function for every $i, j \in \mathcal{N}, i\neq j$.
 \item $f(z, i, j)=0$ for every $i, j \in \mathcal{N}, i\neq j$ and $z\notin \mathcal{Z}^i \subseteq [z^i_{min}, z^i_{max}]^m$, where $z^i_{min}>-1$ and $z^i_{max}<\infty$.
 \item $\sum_{i, j; i\neq j} \int_{\mathbb{R}^m} |z| f(z; i, j) \rd z <\infty$
\end{enumerate}
Note that if $z^i_{min}, z^i_{max} <0$ then in the state $i$ we allow only downward jumps of asset prices. Analogously,
$z^i_{min}, z^i_{max} >0$ means that only upward jumps are allowed in this state.
The last condition guarantees that jumps near the boundary of $\mathcal{Z}^i$ can indeed happen with positive probability, and is needed in the proof of Proposition \ref{prop:unique_h} below.
Define a process family $M_t \in \mathbb{R}^m$ to be:
\begin{equation}\label{eq:Mt}
M_t = \sum_{T_{i}\leq t} Z_i - \int_0^t \sum_{j\neq X_{s-}}\int_{\mathbb{R}^m} z f(z; X_{s-}, j) Q(X_{s-}, j) \rd z \ds
\end{equation}
where the jump times $T_i$ coincide with jumps in process $X$ and random variables $Z_i$ are conditionally independent of $\{\Lambda\}$ and each other, 
and have an $m$-dimensional distribution with density $f(\cdot ; X_{t-}, X_{t})\in \mathfrak{F}$, 
depending on the state before and after the jump. $Z_i$ can be constructed from the uniform random variable $U_i$ by the usual inverse mapping procedure in each dimension involving the density function $f(\cdot ; X_{t-}, X_{t})$.
Note that $\{M\}$ is a martingale family in the filtration $\Ft^{M, X}=\sigma(\{M_s\}_{0\leq s \leq t}, \{X_s\}_{0\leq s \leq t})$, generated by both $M$ and $X$.
Denote the expected value of $Z$ if $X$ jumps from state $i$ to $j$ as:
\begin{equation}\label{eq:xi}
\xi(i, j) =  \int_{\mathbb{R}^m} z f(z; i, j) \rd z
\end{equation}
and define centred jumps as:
\begin{equation}
Y_i = Z_i - \xi(X_{T_{i}-}, X_{T_{i}}).
\end{equation}
The quantities above are well defined because of the integrability assumptions in the definition of $\mathfrak{F}$. Then $M_t$ may be written as:
\begin{equation}\label{eq:M}
M_t = \sum_{T_{i}<t} Y_i + \sum_{T_{i}<t} \xi(X_{T_{i}-}, X_{T_{i}}) - \int_0^t \sum_{j\neq X_{s-}}\xi(X_{s-}, j) Q(X_{s-}, j) \ds. 
\end{equation}
Note that $M_t$ is a Piecewise Deterministic Process (PDP), see \cite{davis1993} for an in-depth discussion.

Moving to the asset model, there are $m$ risky assets in the market, given by:
\begin{equation}
 \frac{\rd S_t^i}{S_{t-}^i} = \mu_i(t, X_t) \dt + \Sigma_i(t, X_t)\dW + \rd M_t^i,\quad S_0^i=s_i,
\end{equation}
for $i=1,\ldots,m$ and initial prices $s_i>0$, where $M_t^i$ is the $i$-th coordinate of $M_t$.
We assume that for some $\epsilon>0$, $\Sigma(t, i)  \Sigma(t, i)' > \epsilon I$ for every $t$, $i$. 
Moreover,  $\mu(t, i)$ and $\Sigma(t, i)$ are continuous functions of $t$ for every $i\in \mathcal{N}$.
The solution to the SDE above is given by:
\begin{equation}
\begin{split}
  S^i_t &= \exp\left( \int_0^t \mu_i(s, X_s) \ds - \half\int_0^t \Sigma_i(s, X_s)\Sigma_i(s, X_s)' \ds + \int_0^t \Sigma_i(s, X_s) \dWs \right)\\
  &\times \exp\left(- \int_0^t \sum_{j\neq X_{s-}}\xi_i(X_{s-}, j) Q(X_{s-}, j) \ds \right)\prod_{0\leq s\leq t} (1+Z^i_s).
\end{split}
\end{equation}
The stock prices are guaranteed positive, thanks to the assumption that $Z^i\geq z_{min} >-1$ in the definition of $\mathfrak{F}$. The upper bound, $Z^i\leq z_{max}<\infty$,
is introduced to allow short-selling of the stocks without a possibility of jumping to bankruptcy. 
Note that some authors work with jumps $\zeta$ defined by $\zeta=\log(1+Z)$
instead, and as a result $\zeta$ can take any real value; for example in the \cite{merton1976} jump diffusion model the $\zeta$ are Gaussian, and in \cite{kou2002} they are doubly-exponential.
The risk free asset is assumed to grow at a rate dependent on the factor process as well\footnote{Note that this constitutes a very simple model for stochastic interest rates.}:
\begin{equation}
 \frac{\rd S^0_t}{S^0_t} = r(t, X_t)\dt, \quad \quad S^0_0=1.
\end{equation}
Let $\Ft^S = \sigma(\{S_u\}_{0\leq u \leq t})$ denote the natural filtration generated by the asset processes and let $\Ft^X = \sigma(\{X_u\}_{0\leq u \leq t})$ be the filtration generated by the factor process $X$. 
As mentioned in the introduction, in this paper we work in the filtration generated by both assets and the factor process: 
\begin{equation}\label{eq:Ft}
\Ft = \sigma(\{S_u, X_u\}_{0\leq u \leq t}) 
\end{equation}
Because the jumps in the assets correspond to the jumps in the martingale $M$, the filtration generated jointly by $M$ and $X$ is a subset of the full filtration: $\Ft^{M, X} \subseteq \Ft$.
$\{M\}$ is also a martingale in $\Ft$.

Note that in our model the state variable $X_t$ not only tracks the current market regime, but also drives the jumps in the asset prices. In practice, the latter jumps
are expected to happen much more often than regime changes.
This could be easily modelled by a two dimensional state process $X$, see Figure \ref{fig:exampleRegimes} for an example.
Because the number of states is assumed to be finite, multidimensional Markov chain may be mapped to a single-dimensional chain,
and hence this scenario is handled by our model out of the box. 

\begin{figure}[htb]
\includegraphics[scale=0.7]{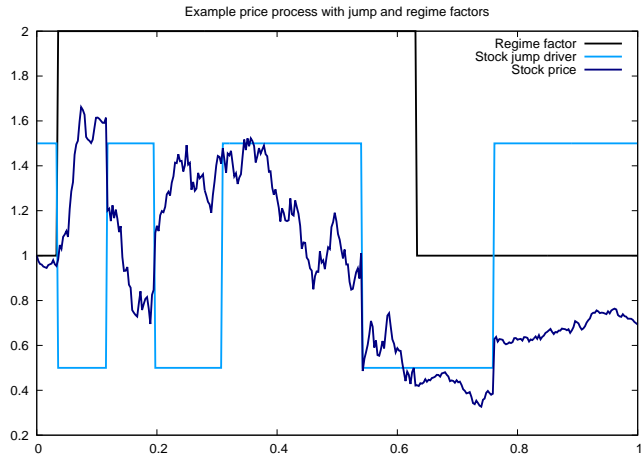}
\caption{\label{fig:exampleRegimes}Example of a path of a stock price process together with corresponding market regime and jump driver paths. }
\end{figure}

\section{The optimal investment problem}\label{sect:criterion}
At every point in time $t$ the investor chooses an asset allocation $h_t$,  an $m$-vector whose $j$-th component 
denotes the proportion  of the portfolio value invested in the  $j$-th asset, $j=1, \dots, m$. Because the risk-sensitive criterion
is defined as $-\infty$ for negative portfolio values, we need to add assumptions on $h$
so that the portfolio never jumps into negative territory. To this end we define for each $i\in\mathcal{N}$ 
\begin{equation}\label{eq:admissibility}
 \mathcal{J}(i) = \left\{h \in \mathbb{R}^m: h'\psi>-1 \quad  \forall \psi\in \mathcal{Z}^i \right\},
\end{equation}
where $\mathcal{Z}^i \subseteq [z^i_{min}, z^i_{max}]^m$ is defined in Section \ref{sect:market}.
Note that $\mathcal{J}(i)$ is non-empty and bounded, because $\mathcal{Z}^i$ is bounded and greater than $-1$ for any $i$.
\begin{definition}\label{def1}
$\mathcal{H}_0$ is the class of functions $h:[0,T]\times\Omega\rightarrow \mathbb{R}^m$ such that
\begin{enumerate}[(i)]
 \item $h_t\in\mathcal{J}(X_{t-})$ for every time $t\in[0, T]$
 \item $h(t,\omega)$ is a predictable process with respect to the filtration $\Ft$ defined in (\ref{eq:Ft}).
\end{enumerate}
\noindent $\mathcal{M}_0$ is the set of measurable function $\hat{h}:[0,T]\times\mathcal{N}\to\mathbb{R}^m$ such that $\forall (t,i), \,\hat{h}(t,i)\in\mathcal{J}(i)$.
Each  $\hat{h}\in\mathcal{M}_0$ defines an element of $\mathcal{H}_0$ by $h(t,\omega)=\hat{h}(t, X_{t-}(\omega))$. 
\end{definition}
The trading portfolio is assumed to be self-financing, and hence
all the changes in portfolio value are caused by the changes in the underlying asset prices:
\begin{equation}
 \rd V_t = \left(\frac{V_t h_t}{S_t}\right)'\rd S_t +\frac{V_t(1-h'\mathbf{1})}{S^0_t}\rd S^0_t,
\end{equation}
where $\frac{V_T h_t}{S_t}$ is the vector containing the number of units in each asset and the division is interpreted componentwise.
$1-h'\mathbf{1}$ is the proportion invested (or borrowed) in the money market account.
After substituting in the previous formula we get:
\begin{equation}
 \frac{\rd V_t}{V_{t-}} = r(t, X_t) \dt + h'_t(\mu(t, X_t)-r(t,X_t)\mathbf{1})\dt + h'_t\Sigma(t,X_t)\dW + h'_t \rd M_t, \quad V_0 = 1.
\end{equation}
Note that the process $h_t'\Sigma\dW + h_t' dM_t$ is a local martingale, hence the solution of the SDE above is the stochastic exponential
of the martingale part with the drift:
\begin{equation}\label{V}
\begin{split}
  V_t &= \exp\left( \int_0^t r_s+h'(\mu_s - r_s\mathbf{1}) \ds - \half\int_0^t h'_s\Sigma\Sigma'h_s \ds + \int_0^t h'_s\Sigma \dWs + \int_0^t h_s' dM_s\right)\\
  &\times \prod_{0\leq s\leq t} (1+h_s'Z_s)e^{-h_s'Z_s},
\end{split}
\end{equation}
where $r_s:=r(s, X_s)$ etc.; see \citet[pp.84-85]{protter2005} for detailed calculations.
The logarithm of the value process is thus given by:
\bea
\log V_t &=& \int_0^t r_s+h'(\mu_s-r_s\mathbf{1}) \ds - \half\int_0^t h'_s\Sigma\Sigma'h_s \ds + \int_0^t h'_s\Sigma \dWs + \int_0^t h_s' dM_s\nonumber \\
&&\quad + \sum_{0\leq s\leq t} \left\{ \log (1+h_s'Z_s) - h_s'Z_s \right\} \label{eq:logVt}\\
&=&\int_0^t [r_s+h_s'(\mu_s-r_s\mathbf{1})] \ds - \half\int_0^t h'_s\Sigma\Sigma'h_s \ds + \int_0^t h'_s\Sigma \dWs\nonumber \\
&&\quad + \sum_{0\leq s\leq t} \log (1+h_s'Z_s) - \int_0^t \sum_{j\neq X_{s-}} h_s'\xi(X_{s-}, j)Q(X_{s-},j) \ds.\label{eq:logVt2}
\eea

\section{Optimisation: the HJB equation}\label{sect:trick}
The investor maximises the risk-sensitive criterion (\ref{rsc}),
using strategies that ensure $V>0$ at all times. This condition is holds for any strategy in the class $\mathcal{H}_0$
because for every realisation of $Z$ the conditions guarantee that
\begin{equation}\label{eq:assPositiveV}
\forall_t  \quad h_t'Z_t>-1\quad\quad\textrm{a.s.}
\end{equation}
Following the idea from \cite{kuroda2002}, we can write the term under the expectation in the criterion (\ref{rsc}) as:
\begin{equation}
 e^{-\theta \log V_T} =  \exp \left(\theta \int_0^T g(t, X_{t-}, h_t) \dt \right)\chi^h_T,
\end{equation}
where
\begin{equation}\label{eq:chi_t}
\begin{split}
 \chi^h_t &= \exp \left( -\theta \int_0^t h_s'\Sigma\dWs -\frac{\theta^2}{2} \int_0^t h_s'\Sigma \Sigma'h_s \ds \right)\prod_{0<s\leq T} (1+h_s'Z_s)^{-\theta}\\
&\times \exp \left(- \int_0^T \sum_{j\neq X_{s-}} \int_{\mathcal{Z}}[(1+h_s'z)^{-\theta} -1]f(z; X_{s-}, j) \rd z Q(X_{s-},j) \ds \right)
\end{split}
\end{equation}
and, for each $i\in\mathcal{N}$, $g(\cdot,i,\cdot):[0,T]\times\mathcal{J}(i)$ is given by
\begin{equation}\label{eq:g}
\begin{split}
 g(t, i, h) &= \half (\theta +1)h'\Sigma(t,i)\Sigma(t,i)'h - r(t, i) - h'(\mu(t,i)-r(t,i)\mathbf{1})\\
 &+\sum_{j\neq i} Q(i, j) \left[ \frac{1}{\theta}\int_{\mathcal{Z}}[(1+h'z)^{-\theta} -1]f(z; i, j) \rd z + h'\xi(i, j)\right]. 
\end{split}
\end{equation}
The process $\chi^h_t$ is an exponential local martingale  \citep[Theorem II.37]{protter2005}.

\begin{definition}\label{def:admissible}
 A trading strategy $h:\Omega \times \mathbb{R} \rightarrow \mathbb{R}^m$ is \emph{admissible}  if 
it is in class $\mathcal{H}_0$ of Definition \ref{def1} and the following condition holds:
\begin{equation}\label{eq:integrability}
 \int_{\mathcal{Z}}(1+h'z)^{-\theta} \sum_{j\neq X_{t-}}Q(X_{t-}, j) f(z; X_{t-}, j) \rd z <\infty.
\end{equation}
The set of admissible strategies will be denoted by $\mathcal{H}$.
\end{definition}

\begin{definition}\label{def:markovStrategies}
The trading strategy $h:[0,T]\times\Omega  \rightarrow \mathbb{R}^m$ is a \emph{Markov strategy} if 
it is in class $\mathcal{H}$ defined above and
\begin{equation}\label{eq:markovControl}
h(t,\omega) = \tilde{h}(t, X_{t-}(\omega)),
\end{equation}
for some $\tilde{h}\in\mathcal{M}_0$. The set of Markov strategies will be denoted by $\mathcal{M}$.
\end{definition}

\begin{proposition}\label{prop:martingale}
For any fixed trading strategy $h\in\mathcal{H}$, the stochastic process $\chi^h$ is a martingale with $\ex{\chi^h_T}=1$.
\end{proposition}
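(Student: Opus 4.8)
The plan is to exploit that $\chi^h$ is a nonnegative local martingale with $\chi^h_0=1$. Indeed $\chi^h=\mathcal{E}(L)$, the Dol\'eans--Dade exponential of the local martingale
\[
L_t = -\theta\int_0^t h_s'\Sigma_s\,\dWs + \int_0^t\!\!\int_{\mathcal{Z}}\bigl[(1+h_s'z)^{-\theta}-1\bigr]\,(p-q)(\ds,\rd z),
\]
where $p(\ds,\rd z)$ is the jump measure of the marked point process $\{(T_i,Z_i)\}$ recording the transitions of $X$ together with their asset jump sizes, and $q$ is its $\mathbb{P}$-compensator $q(\ds,\rd z)=\sum_{j\neq X_{s-}}f(z;X_{s-},j)Q(X_{s-},j)\,\rd z\,\ds$; since $h_s'Z_s>-1$ and $\theta\ge 0$ one has $\Delta L_s=(1+h_s'Z_s)^{-\theta}-1>-1$, so $\chi^h\ge 0$. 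A nonnegative local martingale is a supermartingale, so $\ex{\chi^h_T}\le 1$, and $\chi^h$ is a true martingale on $[0,T]$ if and only if $\ex{\chi^h_T}=1$. Thus the whole task reduces to ruling out loss of mass.

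For this I would invoke the Klebaner--Liptser criterion recalled in the Appendix (the extension of the Bene\v{s} method). Fix a localising sequence $\tau_n\uparrow\tau_\infty$ for $\chi^h$, so that each stopped process $\chi^h_{\cdot\wedge\tau_n}$ is a genuine martingale; then $\rd\mathbb{Q}_n := \chi^h_{T\wedge\tau_n}\,\rd\mathbb{P}$ defines a consistent family of probability measures on the $\sigma$-algebras $\mathcal{F}_{T\wedge\tau_n}$, which extends to a measure $\mathbb{Q}$ on $\bigvee_n\mathcal{F}_{T\wedge\tau_n}$, and the criterion gives the identity $\ex{\chi^h_T}=\mathbb{Q}(\tau_\infty>T)$. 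It therefore suffices to show that under $\mathbb{Q}$ the driving noise does not ``explode'' before time $T$, i.e. that $\mathbb{Q}(\tau_\infty>T)=1$.

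The substantive step is to describe the dynamics under $\mathbb{Q}$ via the Girsanov theorem for semimartingales with jumps. Two things happen. First, $W_t+\theta\int_0^t\Sigma_s'h_s\,\ds$ is a $\mathbb{Q}$-Brownian motion, so the continuous part of $L$ merely acquires a drift; since each $\mathcal{J}(i)$ is bounded, $h$ is bounded, and $\Sigma(\cdot,i)$ is continuous on the compact interval $[0,T]$, the continuous part contributes nothing to explosion. Second, the compensator of $p$ is tilted by the factor $(1+h_s'z)^{-\theta}$, so that under $\mathbb{Q}$ the factor process $X$ is still a pure-jump process on the \emph{finite} set $\mathcal{N}$, now with jump intensity of a transition $X_{t-}=i\to j$ at time $t$ equal to $Q(i,j)\int_{\mathcal{Z}}(1+h_t'z)^{-\theta}f(z;i,j)\,\rd z$, which is finite by the admissibility condition \eqref{eq:integrability}. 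Using boundedness of $h$ on $\mathcal{J}(i)$ together with $z^i_{min}>-1$ and $z^i_{max}<\infty$, this $\mathbb{Q}$-intensity is in fact locally bounded in $t$, uniformly over the finitely many states; a finite-state pure-jump process with locally bounded jump rate makes only finitely many jumps on $[0,T]$ almost surely, so $\tau_\infty>T$ $\mathbb{Q}$-a.s. Combining the two, $\mathbb{Q}(\tau_\infty>T)=1$, hence $\ex{\chi^h_T}=1$ and $\chi^h$ is a martingale on $[0,T]$.

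I expect the delicate point to be precisely this last analysis of the jump part: writing down and rigorously justifying the $\mathbb{Q}$-compensator of the random measure $p$ through the Girsanov transformation, and checking that the admissibility condition \eqref{eq:integrability} together with the structural assumptions on $\mathfrak{F}$ and on the sets $\mathcal{J}(i)$ genuinely upgrade ``finite at each $t$'' to a bound strong enough for a non-explosion test to apply. The reduction to $\ex{\chi^h_T}=1$, the construction of $\mathbb{Q}$, and the treatment of the Brownian part are comparatively routine.
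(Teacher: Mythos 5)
Your proposal is correct in outline but it is \emph{not} the route the paper takes, and it mislabels its own method. What you describe --- passing to a F\"ollmer-type measure $\mathbb{Q}$ obtained by extending the consistent family $\rd\mathbb{Q}_n=\chi^h_{T\wedge\tau_n}\,\rd\mathbb{P}$ and reducing the martingale property to the non-explosion statement $\mathbb{Q}(\tau_\infty>T)=1$ --- is the classical non-explosion criterion, not the Klebaner--Liptser method. The paper's actual argument (Appendix B) never constructs a limiting measure $\mathbb{Q}$: it first shows, via It\^o's formula applied to $\chi^2_{t\wedge\tau_n}$ and the uniform bound \eqref{eq:klebaner:bound}, that each stopped density $\chi_{t\wedge\tau_n}$ is a square-integrable true martingale; it uses this to define a genuine probability $\mathbb{P}^n$ on $\mathcal{F}_T$; it writes $\chi=\exp(M-A)$ with $A\ge 0$ (from $\log x\le x-1$) so that $\ex{\chi_{t\wedge\tau_n}\log\chi_{t\wedge\tau_n}}\le\mathbb{E}^n\left[M_{t\wedge\tau_n}\right]$; it bounds the right-hand side uniformly in $n$ by the same constant $r$ after rewriting $M$ in terms of the $\mathbb{P}^n$-Brownian motion and the tilted compensator; and it concludes uniform integrability of the family $\{\chi_{t\wedge\tau_n}\}_n$ by de la Vall\'ee Poussin with the function $x\log x$. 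That is the Klebaner--Liptser extension of Bene\v{s}'s method. What each approach buys: yours gives a clean probabilistic picture (the mass lost by $\chi^h$ is exactly the $\mathbb{Q}$-probability of explosion before $T$) and reduces everything to a qualitative non-explosion check for a finite-state jump-diffusion; the paper's gives a quantitative entropy bound that works on the given abstract probability space without ever extending measures.

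The one genuine gap in your argument is the construction of $\mathbb{Q}$ and the identity $\ex{\chi^h_T}=\mathbb{Q}(\tau_\infty>T)$. A consistent family of measures on the increasing $\sigma$-algebras $\mathcal{F}_{T\wedge\tau_n}$ does not in general extend to a measure on $\bigvee_n\mathcal{F}_{T\wedge\tau_n}$ over an abstract probability space; one needs the unaugmented canonical filtration on a standard Borel path space (the usual F\"ollmer-measure hypotheses), and nothing in Section \ref{sect:market} guarantees this. The paper deliberately avoids the issue: each $\mathbb{P}^n$ it uses is defined by a density that is already a true martingale on $[0,T]$. If you set the model up on canonical path space, your argument goes through, and its remaining ingredients --- the Girsanov shift of $W$ and the tilting of the jump compensator by $(1+h'z)^{-\theta}$, plus the observation that a finite-state chain with finite tilted rates makes finitely many jumps on $[0,T]$ --- are exactly the computations the paper performs under $\mathbb{P}^n$ in \eqref{eq:W_n_t} and \eqref{eq:tilde_phi}. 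Note finally that both your step ``locally bounded in $t$, uniformly over states'' and the paper's inequality \eqref{eq:klebaner:bound} require more than the pointwise finiteness asserted in \eqref{eq:integrability}; this uniformity is where the boundedness of the sets $\mathcal{J}(i)$ and $\mathcal{Z}^i$ must actually be invoked, as you correctly anticipate in your closing remarks.
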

\begin{proof}
See Appendix \ref{sect:proof:prop:martingale}.
\end{proof}
Using the proposition above and measure change theory summarised in Appendix \ref{sect:measureChange},
we use the martingale $\{\chi^h_t\}$ to change the probability measure:
\begin{equation}\label{eq:control:measureChange}
 \left.\frac{\rd \mathbb{P}^h}{\rd \mathbb{P}}\right|_{\FT} = \chi^h_T.
\end{equation}
Under the new measure $\mathbb{P}^h$ the risk-sensitive criterion becomes:
\begin{equation}
\begin{split}
J_\theta(h,T) &= -\frac{1}{\theta}\log\ex{e^{-\theta \log V_T}}\\
&=-\frac{1}{\theta}\log\ex[h]{e^{\theta \int_0^T g(t, X_t, h_t) \dt}}
\end{split}
\end{equation}
and $Q^h = [Q^h(i, j)]$ becomes a generalised generator of $X$ with elements:
\begin{equation}\label{eq:qh}
\begin{split}
Q^h(i, j)(t) &= Q(i, j)\left[\int_{\mathcal{Z}^i}[(1+h'z)^{-\theta}]f(z; i, j) \rd z\right] ,\\
Q^h(i, i)(t) &= -\sum_{j \neq i} Q^h(i, j)(t).
\end{split}
\end{equation}
We can write the optimal value function as:
\begin{equation}\label{eq:value_f_v}
\begin{split}
v(t, i) &= \sup_{h\in\mathcal{H}} -\frac{1}{\theta}\log \mathbb{E}_{t, i}^h\left[ e^{\theta \int_t^T g(s, X_s, h_s) \ds}\right] \\
&= -\frac{1}{\theta} \log u(t, i),
\end{split}
\end{equation}
where
\begin{equation}\label{eq:value_f_u}
u(t, i) = \inf_{h\in\mathcal{H}} \mathbb{E}_{t, i}^h\left[ e^{\theta \int_t^T g(s, X_s, h_s) \ds}\right]
\end{equation}
and where $\mathbb{E}_{t, i}^\cdot\left[ \cdot \right]$ denotes the conditional expectation $\mathbb{E}^\cdot\left[ \cdot | X_t = i \right]$.
Note that, thanks to the measure change and the normalisation of the initial investment,
the value function doesn't depend on the value of the portfolio at time $t$, it only depends 
on the state of the factor process $X$.
In the remainder of the paper we solve for the function $u$, and the original value function $v$ can be easily obtained using the formula above.

Below we include a few propositions that make it easier to understand what $u$ is and what its basic properties are.
\begin{proposition} \label{prop:control:u_representation}
The function $u$ may also be expressed as:
\begin{equation}
u(t, i) = \inf_{h\in\mathcal{H}}  \mathbb{E}_{t, i}\left[e^{-\theta \log (V_T/V_t)}\right] = \inf_{h\in\mathcal{H}} \mathbb{E}_{0, i}\left[ (V^t_{T-t})^{-\theta}\right].
\end{equation}
In the last expression, $V^t_s$ is the portfolio value process as in \eqref{V} but with time-shifted coefficients $\mu^t(s,i):=\mu(t+s,i)$ etc., for $s\in[0,T-t]$.
\end{proposition}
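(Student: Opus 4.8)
The plan is to prove the two equalities in turn. \emph{First equality.} This is the ``localised'' counterpart of the Bielecki--Kuroda change of measure used to pass from \eqref{rsc} to \eqref{eq:value_f_v}. Fixing $h\in\mathcal{H}$, I would begin by checking, via a direct computation from \eqref{eq:logVt2}, \eqref{eq:chi_t} and \eqref{eq:g}, that $e^{-\theta\log V_t}=\exp(\theta\int_0^t g(s,X_{s-},h_s)\,\ds)\,\chi^h_t$ for every $t\in[0,T]$. Dividing the identity at $T$ by the identity at $t$ yields the cocycle relation
\[
e^{-\theta\log(V_T/V_t)}=\exp\!\left(\theta\int_t^T g(s,X_{s-},h_s)\,\ds\right)\frac{\chi^h_T}{\chi^h_t}\,.
\]
Since $\chi^h$ is a genuine $(\Ft)$-martingale with $\chi^h_0=1$ by Proposition \ref{prop:martingale}, the ratio $\chi^h_T/\chi^h_t$ is the conditional density of $\mathbb{P}^h$ relative to $\mathbb{P}$ given $\Ft$, so the conditional Bayes rule gives, on $\{X_t=i\}$,
\[
\mathbb{E}_{t,i}\!\left[e^{-\theta\log(V_T/V_t)}\right]
=\mathbb{E}_{t,i}\!\left[\exp\!\left(\theta\int_t^T g(s,X_{s-},h_s)\,\ds\right)\frac{\chi^h_T}{\chi^h_t}\right]
=\mathbb{E}^h_{t,i}\!\left[\exp\!\left(\theta\int_t^T g(s,X_s,h_s)\,\ds\right)\right],
\]
where under the time integral $X_{s-}$ may be replaced by $X_s$. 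Taking $\inf_{h\in\mathcal{H}}$ and comparing with \eqref{eq:value_f_u} gives the first equality.

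For the \emph{second equality} I would use the time-homogeneity of $X$. Fix $h\in\mathcal{H}$. Because $Q$ is homogeneous and $W$ and the i.i.d.\ sequence $(U_n)_{n\geq1}$ have stationary independent increments, the Markov property of the driving triple $(X,W,(U_n))$ shows that, conditionally on $\{X_t=i\}$, the shifted inputs $(X_{t+u},\,W_{t+u}-W_t,\,\text{jump marks arriving after }t)_{u\in[0,T-t]}$ have the same joint law as a fresh copy of the inputs on $[0,T-t]$ started from state $i$. After the substitution $s=t+u$ in \eqref{eq:logVt2}, the functional of these inputs and of $h|_{(t,T]}$ that produces $\log(V_T/V_t)$ is exactly the one that produces $\log V^t_{T-t}$ from the fresh inputs, the time-shifted coefficients $\mu^t,\Sigma^t,r^t$, and the strategy $u\mapsto h_{t+u}$. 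Hence the $\mathbb{P}(\,\cdot\mid X_t=i)$-law of $e^{-\theta\log(V_T/V_t)}$ equals the $\mathbb{P}(\,\cdot\mid X_0=i)$-law of $(V^t_{T-t})^{-\theta}$. Since $\log(V_T/V_t)$ does not depend on $h|_{[0,t]}$, and the shift $h\mapsto h_{t+\cdot}$ identifies $\mathcal{H}$ restricted to $(t,T]$ with the admissible class for $\mu^t,\Sigma^t,r^t$ on $[0,T-t]$, the two infima agree.

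The step I expect to be the real obstacle is not either probabilistic identity but the bookkeeping around admissibility under the time shift: one has to check that a predictable $\mathcal{H}$-strategy on $(t,T]$ corresponds, after conditioning on $\Ft$ and relabelling time, to a bona fide admissible strategy for the shifted model on $[0,T-t]$, and conversely, so that the two infima are genuinely over the same set. This amounts to matching the post-$t$ filtration of the driving triple with $(\mathcal{F}_{t+u})_{u\geq0}$ and verifying that the constraints \eqref{eq:admissibility} and \eqref{eq:integrability} are shift-invariant---which they are, since $\mathcal{Z}^i$ and $f(\,\cdot\,;i,j)$ carry no time dependence. The cocycle identities for $\chi^h$ and $V$, by contrast, are routine once the explicit formulas \eqref{V}, \eqref{eq:logVt2} and \eqref{eq:chi_t} have been written out.
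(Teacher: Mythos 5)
Your argument is correct and follows essentially the same route as the paper: the first equality is obtained by undoing the measure change \eqref{eq:control:measureChange} via the conditional Bayes rule (using Proposition \ref{prop:martingale} to guarantee $\chi^h$ is a true martingale), and the second by the Markov property and time-homogeneity of the driving noise. The only difference is one of detail: the paper invokes Remark 5.2 of \cite{bouchard2011} to restrict to controls independent of $\Ft$ and then states the shift step as ``simple algebraic transformations,'' whereas you carry out the admissibility bookkeeping for the time shift explicitly, which is a welcome elaboration rather than a departure.
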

\begin{proof}
By Remark 5.2 from \cite{bouchard2011} we can restrict ourselves to controls $h$ that are independent of $\Ft$. To prove the first equality
we apply the measure change defined by (\ref{eq:control:measureChange}) to equation (\ref{eq:value_f_u}) backwards:
\begin{equation}
\begin{split}
\mathbb{E}_{t, i}^h\left[ e^{\theta \int_t^T g(s, X_s, h_s) \ds}\right] &=  \mathbb{E}_{t, i}^h\left[ e^{\theta \int_t^T g(s, X_s, h_s) \ds} \chi_T^h / \chi_t^h\right]\\
&=\mathbb{E}_{t, i}\left[e^{-\theta \log (V_T/V_t)} \right].
\end{split}
\end{equation}
Note that $V_T/V_t$, conditionally on $X_t=i$, is independent of $\Ft$. 
The second equality follows from the Markov property of the system and simple algebraic transformations.
\end{proof}
\begin{proposition}\label{prop:missingIngredient}
The function $u(t, i)$ is increasing in time parameter $t$.
\end{proposition}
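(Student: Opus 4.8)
The plan is a concatenation (dynamic-programming) argument applied to the representation \eqref{eq:value_f_u}, $u(t,i)=\inf_{h\in\mathcal{H}}\mathbb{E}^h_{t,i}\bigl[e^{\theta\int_t^T g(s,X_s,h_s)\ds}\bigr]$, resting on one elementary observation: evaluating \eqref{eq:g} at $h=0$ gives $g(t,i,0)=-r(t,i)\le0$ (the money-market rate being nonnegative), while $\chi^0\equiv1$. Thus ``holding the whole portfolio in the bank'' over any time interval neither changes the measure nor increases the criterion, contributing only a factor $e^{-\theta\int r\,\ds}\le1$.

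Fix $0\le t_1<t_2\le T$, a state $i$, and put $\delta=t_2-t_1$. Given any Markov strategy $h$ for the problem on $[t_2,T]$ --- we may restrict to Markov controls as in the proof of Proposition~\ref{prop:control:u_representation} --- define $\bar h$ on $[t_1,T]$ by $\bar h\equiv0$ on $[t_1,t_2)$ and $\bar h=h$ on $[t_2,T]$ (the prefix is admissible since $0\in\mathcal{J}(i)$ for every $i$). Splitting $\int_{t_1}^{T}$ as $\int_{t_1}^{t_2}+\int_{t_2}^{T}$, bounding $e^{\theta\int_{t_1}^{t_2}g(s,X_s,0)\ds}\le1$, conditioning on $\mathcal{F}_{t_2}$, and using that under $\mathbb{P}^{\bar h}$ the chain $X$ has law $\mathbb{P}$ on $[t_1,t_2]$ and generator $Q^h$ afterwards, one obtains
\[
u(t_1,i)\ \le\ \mathbb{E}_{t_1,i}\bigl[\Psi_h(X_{t_2})\bigr],\qquad
\Psi_h(j):=\mathbb{E}^h_{t_2,j}\Bigl[e^{\theta\int_{t_2}^T g(s,X_s,h_s)\ds}\Bigr].
\]
Minimising the right-hand side over $h$ through a single Markov feedback policy that, for each $\epsilon>0$, is within $\epsilon$ of optimal from \emph{every} starting state --- the usual dynamic-programming selection --- and letting $\epsilon\downarrow0$ yields $u(t_1,i)\le\sum_{j}(e^{Q\delta})_{ij}\,u(t_2,j)$. (The same inequality can also be read off Proposition~\ref{prop:control:u_representation} by prepending a purely risk-free segment to a wealth path, using $V^{t_1}_{\delta}\ge1$ together with the matching of the time-shifted coefficients.)

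The step I expect to be the genuine obstacle is exactly this last bound. Because the factor chain can leave $i$ during $[t_1,t_2]$, the concatenation controls $u(t_1,i)$ only by the stochastic average $\sum_j(e^{Q\delta})_{ij}u(t_2,j)$ of the later values, not by $u(t_2,i)$ itself. Since $e^{Q\delta}$ is a stochastic matrix this average is $\le\max_j u(t_2,j)$, so the argument as it stands shows cleanly that $t\mapsto\max_i u(t,i)$ is non-decreasing; combined with the terminal datum $u(T,\cdot)\equiv1$ (immediate from \eqref{eq:value_f_u}) this already delivers a uniform bound $u\le1$. Obtaining the sharper per-state monotonicity $u(t_1,i)\le u(t_2,i)$ requires an extra structural fact about the value function --- namely $(e^{Q\delta}u(t,\cdot))_i\le u(t,i)$, i.e. that a short excursion of the chain cannot raise the expected value of $u(t,\cdot)$ --- and it is the verification of this property, rather than the concatenation itself, that carries the weight of the proof.
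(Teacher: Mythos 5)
Your concatenation is set up at the wrong end of the time interval, and you have correctly diagnosed the consequence: prepending a cash segment on $[t_1,t_2)$ leaves the chain free to move, so you only control $u(t_1,i)$ by the average $\sum_j(e^{Q\delta})_{ij}\,u(t_2,j)$, and the missing inequality $(e^{Q\delta}u(t_2,\cdot))_i\le u(t_2,i)$ is simply false in general --- take $i$ to be the state where $u(t_2,\cdot)$ attains its minimum; then any transition raises the average. So the proposal as written proves only that $\max_i u(t,i)$ is non-decreasing, not the stated per-state monotonicity, and the ``extra structural fact'' you defer to is not available. This is a genuine gap, not a technicality.

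The paper avoids the problem by appending the risk-free segment at the \emph{terminal} end of the horizon rather than the initial end. Via Proposition \ref{prop:control:u_representation}, $u(t,i)=\inf_h\mathbb{E}_{0,i}\bigl[(V^t_{T-t})^{-\theta}\bigr]$ is a fixed-initial-state, variable-horizon problem: both $u(t,i)$ and $u(t-\delta,i)$ start from state $i$ at time $0$ and differ only in the horizon length $\tau=T-t$ versus $\tau+\delta$. Take a near-optimal strategy for horizon $\tau$ and extend it to $[0,\tau+\delta]$ by holding cash on $[\tau,\tau+\delta]$; then $V_{\tau+\delta}\ge V_\tau e^{r_{\min}\delta}\ge V_\tau$ pathwise, so $V_{\tau+\delta}^{-\theta}\le V_\tau^{-\theta}$, and taking expectations and infima gives $u(t-\delta,i)\le u(t,i)$ with no averaging over an intermediate state distribution --- the state of the chain during the appended segment is irrelevant because only the multiplicative factor $e^{\int r}\ge1$ on terminal wealth matters. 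The only price is that the reused strategy now faces coefficients shifted by $\delta$, which is exactly what the time-shift representation in Proposition \ref{prop:control:u_representation} is there to absorb. To salvage your architecture you would have to re-index the strategy so that the chain's position at $t_2$ becomes irrelevant, which is precisely the end-appending argument in disguise.
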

\begin{proof}
From Proposition \ref{prop:control:u_representation}, $u(t, i)$ is the value function corresponding to the optimal investment over the time horizon
$[0, \tau]$, where $\tau = T - t$. Let $0<\delta \leq t$. We can extend the trading strategy from $[0, \tau]$ to $[0, \tau+\delta]$
by putting all the money in the bank at $\tau$, giving $V_{\tau+\delta} \geq V_\tau e^{r\delta}$, where $r$ is the minimal interest rate.
Hence the expectation is decreasing in the time horizon:
\begin{equation}
\ex{V_{\tau+\delta}^{-\theta}} \leq \ex{(V_\tau e^{r\delta})^{-\theta}} = \ex{V_{\tau}^{-\theta}} e^{-\theta r\delta}\leq \ex{V_{\tau}^{-\theta}},
\end{equation}
which implies that the function $u$ is increasing in the real time variable $t$:
\begin{equation}
u(t-\delta, i) \leq u(t, i).
\end{equation}
\end{proof}

\begin{proposition}\label{prop:domain}
The range of $u$ is a compact set $\mathcal{U} \subseteq [u_{min}, u_{max}]^N$, such that $0 < u_{min} \leq u_{max} < \infty$, 
where $u$ is defined in (\ref{eq:value_f_u}).
\end{proposition}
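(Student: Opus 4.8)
My plan is to produce explicit bounds $0<u_{min}\le u(t,i)\le u_{max}<\infty$ valid for every $(t,i)$, so that the range lies in the compact box $[u_{min},u_{max}]^N$, and then to upgrade ``bounded'' to ``compact'' by showing that $t\mapsto u(t,\cdot)$ is continuous. For the lower bound, the first step is to check that $g$ of \eqref{eq:g} is bounded below on $\{(t,i,h):t\in[0,T],\ i\in\mathcal N,\ h\in\mathcal J(i)\}$: the quadratic term $\tfrac12(\theta+1)h'\Sigma\Sigma'h$ is nonnegative, the terms $-r-h'(\mu-r\mathbf 1)$ and $\sum_{j\neq i}Q(i,j)h'\xi(i,j)$ are bounded because each $\mathcal J(i)$ is bounded and $\mu,\Sigma,r$ are continuous on $[0,T]$ while $\xi(i,j)$ is finite by condition (iii) on $\mathfrak F$, and since $(1+h'z)^{-\theta}>0$ and $f(\cdot;i,j)$ is a density, $\sum_{j\neq i}Q(i,j)\tfrac1\theta\int_{\mathcal Z}[(1+h'z)^{-\theta}-1]f\,\rd z\ge-\tfrac1\theta\max_i\lambda(i)$. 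Hence $g\ge -C$ for a constant $C$, and since every admissible strategy satisfies $h_s\in\mathcal J(X_{s-})$ a.s.\ we get $\int_t^Tg(s,X_s,h_s)\,\ds\ge -CT$ a.s., so $e^{\theta\int_t^Tg\,\ds}\ge e^{-\theta CT}$ and, taking the infimum in \eqref{eq:value_f_u}, $u(t,i)\ge e^{-\theta CT}=:u_{min}>0$.

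For the upper bound, I would note that the constant strategy $h\equiv 0$ lies in $\mathcal H$ (it belongs to every $\mathcal J(i)$ and \eqref{eq:integrability} then reduces to $\lambda(X_{t-})<\infty$) and produces $V_T/V_t=\exp(\int_t^Tr(s,X_s)\,\ds)$, so by Proposition \ref{prop:control:u_representation} $u(t,i)\le\mathbb E_{t,i}[e^{-\theta\int_t^Tr(s,X_s)\,\ds}]<\infty$ uniformly in $(t,i)$ since $r$ is bounded; even more simply, $u(T,i)=1$ and Proposition \ref{prop:missingIngredient} gives $u(t,i)\le u(T,i)=1$, so one may take $u_{max}=1$. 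This yields $\mathcal U\subseteq[u_{min},u_{max}]^N$.

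To pass from bounded to compact, observe that the range $\mathcal U=\{u(t,\cdot):t\in[0,T]\}$ is bounded, hence compact as soon as it is closed, and it is the continuous image of $[0,T]$ once $t\mapsto u(t,\cdot)$ is shown continuous. By Proposition \ref{prop:missingIngredient} each $u(\cdot,i)$ is nondecreasing, hence has one-sided limits everywhere, so it remains to rule out jumps. For $t<t'$ one already has $u(t,i)\le u(t',i)$; for the reverse inequality up to a vanishing error I would take an $\varepsilon$-optimal strategy for $u(t,i)$ — equivalently, for the horizon-$(T-t)$ problem with the time-shifted coefficients of Proposition \ref{prop:control:u_representation} — restrict it to the sub-horizon of length $T-t'$, and bound the discrepancy using: the estimate $g\ge -C$ on the short interval $[t,t']$, which costs at most a factor $e^{-\theta C(t'-t)}\to1$; the modulus of continuity of $\mu,\Sigma,r$, which controls the coefficient mismatch on $[t,t']$ (the jump data $f,Q,\xi,\mathcal Z^i$ are $t$-independent and contribute nothing); and the $\mathbb P^h$-probability that $X$ leaves state $i$ during $[t,t']$. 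The enabling remark, again from the computation above, is that $g(t,i,h)$ differs from $\tfrac1\theta\lambda^h(i)(t)$ by a uniformly bounded amount, where $\lambda^h(i)(t):=-Q^h(i,i)(t)$ is the $\mathbb P^h$-jump intensity; hence $\exp(\int_t^T\lambda^h(X_s)\,\ds)\le e^{\theta C'T}\exp(\theta\int_t^Tg\,\ds)$, so for strategies with $\mathbb E^h[e^{\theta\int g\,\ds}]\le u_{max}$ (no restriction in the infimum) one obtains the uniform bound $\mathbb E^h_{t,i}[\exp(\int_t^T\lambda^h(X_s)\,\ds)]\le e^{\theta C'T}u_{max}$, controlling the number of jumps and thus the short-time estimate.

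I expect the genuinely delicate point to be exactly this uniformity: $g$ is \emph{not} bounded above uniformly over all of $\mathcal H$ (the term $\int_{\mathcal Z}(1+h'z)^{-\theta}f\,\rd z$ can be arbitrarily large), and $\mathbb E^h[(V_T)^{-2\theta}]$ is likewise not uniformly bounded, so a crude Cauchy--Schwarz argument does not work; it is the uniform exponential-intensity bound above that is needed to force $\mathbb P^h_{t,i}[X_{t'}\neq i]\to0$ uniformly as $t'\downarrow t$. If for the subsequent fixed-point/ODE analysis leading to Theorem \ref{theorem:control} it suffices that $u$ take values in \emph{some} fixed compact set, one can bypass the continuity argument altogether and simply let $\mathcal U$ be the closure of the range of $u$, which is compact because the range is bounded.
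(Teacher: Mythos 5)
Your bounds are correct and coincide with the paper's own argument: the lower bound comes from $g\ge g_{\min}>-\infty$ on $[0,T]\times\mathcal N\times\mathcal J(i)$ (the paper states this without itemising the terms as you do, but the content is identical), and the upper bound comes from the all-cash strategy, giving $u(t,i)\le e^{-\theta r_{\min}(T-t)}$; your alternative $u\le u(T,i)=1$ via Proposition \ref{prop:missingIngredient} is a legitimate shortcut the paper does not use. Where you genuinely diverge is on the word ``compact'': the paper's proof stops at boundedness and closes with the remark that ``the solution is defined on the finite time interval $[0,T]$,'' leaving the closedness of the range implicit, whereas you correctly identify that boundedness alone does not give compactness and offer two repairs --- a continuity argument for $t\mapsto u(t,\cdot)$ (monotonicity plus an $\varepsilon$-optimal-strategy comparison over $[t,t']$), and the pragmatic observation that replacing the range by its closure inside $[u_{\min},u_{\max}]^N$ suffices for everything downstream (the Lipschitz estimate of Proposition \ref{prop:lipschitz} only needs $\mathcal U$ compact and bounded away from $0$ and $\infty$). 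Your continuity sketch is only a sketch --- the uniform control of the $\mathbb P^h$-jump intensity over all near-optimal $h$ is asserted rather than fully proved --- but since you explicitly fall back on the closure argument, which is airtight and is all the theorem's proof requires, there is no gap in what you actually rely on. In short: same proof as the paper for the quantitative bounds, plus a more honest treatment of the topological claim.
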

\begin{proof}
Because the value function is defined in terms of minimisation of the expectation $\ex{e^{-\theta \log V_T}}$ 
we can bound the value from below and above. First note that the function $g$ defined in (\ref{eq:g}) 
is bounded from below for any $t, i, h$, and let $g_{min} = \inf \{g(t, i, h) : t\in [0, T], i \in \mathcal N, h \in \mathcal{J}_h \}$. Then 
\begin{equation}
u(t, i) \geq e^{\theta g_{min} (T-t)} >0.
\end{equation}
The argument showing the upper bound is a bit more subtle. The expectation $\ex{e^{-\theta \log V_T}}$ is large
when the portfolio value on average performs badly. Note, however, that the investor always has the option
to put all his wealth in the money market account. This pays a guaranteed return, which is different in every regime,
but it is at least $r_{min} = \inf \{ r(i, t) : i \in \mathcal N, t\in [0, T] \}$. Thanks to the minimisation operator in the definition of $u$, 
the upper bound is given by:
\begin{equation}
u(t, i) = \inf_{h\in\mathcal{H}}  \ex{e^{-\theta \log (V_T/V_t)}} \leq e^{-\theta r_{min}(T-t)} 
\end{equation}
Note that $g_{min} \leq \inf \{g(t, i, 0) : t\in [0, T], i \in \mathcal N \} = -r_{min}$, hence $\mathcal{U}$ is not empty.
The solution is defined on the finite time interval $[0, T]$, which finishes the argument.
\end{proof}

\subsection{HJB equation: general case}\label{sec_general}
In this section $\vectFunc{u}$ will denote an $N$-vector valued function on $[0,T]$ and we use $\vec{u}$ for a generic $N$-vector.
We need to solve the HJB equation, which in this case is the ODE in $\mathbb{R}^N$:
\begin{equation}\label{eq:ode_min}
\frac{\rd \vectFunc{u}}{\rd t} + \inf_{h\in\mathcal{H}} \left\{ A(\vectFunc{u}(t), h) \right\} =0,
\end{equation}
with
\begin{equation}\label{eq:Auh}
 A(\vec{u}, h) = Q^h \vec{u} + \theta \diag (g)\vec{u},
\end{equation}
where $Q^h(i)$ is the $i$th row of $Q^h$. 
All the functions of $X$ are interpreted as corresponding vectors and $\theta>0$. The $\inf$ operator is
interpreted componentwise, that is the vector $h_i$ minimises the $i$th element of $A(\vectFunc{u}(t), h)$.
The boundary condition is given by:
\begin{equation}\label{eq:boundaryCondition}
\vectFunc{u}_i(T) = 1, \quad i\in\mathcal{N}.
\end{equation}
Our objective is to show that $\vectFunc{u}_i(t)=u(t,i)$, the value function of \eqref{eq:value_f_u}. The following is the main result of this paper.
\begin{theorem}\label{theorem:control}
Suppose the market is defined as in Section \ref{sect:market}, admissible strategies are as in Definition \ref{def:admissible} and $\theta>0$. 
Then the HJB equation (\ref{eq:ode_min}) with final condition (\ref{eq:boundaryCondition}) defined above,
 has a unique solution on $[0, T]$, which coincides with the value function defined in (\ref{eq:value_f_u}). The Markov control $\tilde{h}(t, X_{t-})=h^*(\vectFunc{u},t,X_{t-})$ is optimal in the class of admissible controls $\mathcal{H}$.
\end{theorem}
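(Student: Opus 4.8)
The plan is to prove the theorem by the standard verification-plus-existence strategy adapted to the finite-state factor setting, where the HJB equation is an ODE system in $\mathbb{R}^N$. The argument breaks into three parts: (1) show the infimum in \eqref{eq:ode_min} is attained by a well-behaved function $h^*(\vec{u},t,i)$, so the ODE has a genuine right-hand side; (2) establish local existence and uniqueness of a solution $\vectFunc{u}$ and extend it to all of $[0,T]$ using the a priori bounds of Proposition \ref{prop:domain}; (3) carry out a verification argument showing any classical solution coincides with the value function $u(t,i)$ of \eqref{eq:value_f_u} and that the induced Markov control is optimal.

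First I would analyse the pointwise minimisation problem $\inf_{h\in\mathcal{J}(i)} A_i(\vec{u},h)$ for fixed $t$, $i$ and fixed positive vector $\vec{u}$. Writing $A_i$ out explicitly, the $h$-dependent part is
\begin{equation}
\half(\theta+1)h'\Sigma\Sigma'h - h'(\mu - r\mathbf{1}) + \sum_{j\neq i} Q(i,j)\vec{u}_j \int_{\mathcal{Z}^i}(1+h'z)^{-\theta}f(z;i,j)\,\rd z + (\text{$h$-free terms}),
\end{equation}
after collecting the $\frac1\theta\int[(1+h'z)^{-\theta}-1]$ terms from $\theta\,\diag(g)\vec u$ together with those already in $Q^h\vec u$. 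The quadratic-plus-convex-integral structure means the objective is strictly convex in $h$ on the open bounded convex set $\mathcal{J}(i)$ (the integral term is convex since $x\mapsto x^{-\theta}$ is convex on $(0,\infty)$, and the quadratic is strictly convex because $\Sigma\Sigma'>\epsilon I$). The objective blows up to $+\infty$ as $h$ approaches the part of $\partial\mathcal{J}(i)$ where $h'\psi\to -1$ for some $\psi\in\mathcal{Z}^i$ — this is exactly where condition (iii) in the definition of $\mathfrak{F}$ is used, to guarantee the integral genuinely diverges. Hence a unique minimiser $h^*(\vec{u},t,i)$ exists in the interior, it automatically satisfies the admissibility integrability condition \eqref{eq:integrability}, and by an implicit-function / maximum-theorem argument it is continuous (indeed locally Lipschitz) in $(\vec u,t)$ on the compact range $\mathcal{U}$. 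I would then check that $\vec u\mapsto \inf_h A(\vec u,h)$ is Lipschitz on a neighbourhood of $\mathcal{U}$, so Picard–Lindelöf gives local existence and uniqueness for the terminal-value problem, and the bounds $u_{min}\le\vectFunc{u}_i\le u_{max}$ — which I would re-derive directly from the ODE via Gronwall-type comparison, mirroring Proposition \ref{prop:domain} — keep the solution in the region of Lipschitz continuity, so it extends to all of $[0,T]$.

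For the verification step, given the classical solution $\vectFunc{u}$ and the Markov control $\tilde h(t,i)=h^*(\vectFunc u(t),t,i)$, I would apply the Dynkin/Itô formula for the jump-diffusion under the measure $\mathbb{P}^h$ to the process $t\mapsto \vectFunc{u}_{X_t}(t)\exp(\theta\int_0^t g(s,X_s,h_s)\,\rd s)$, whose $\mathbb{P}^h$-drift is exactly $\big(\frac{\rd\vectFunc u}{\rd t} + A(\vectFunc u(t),h_t)\big)_{X_t}\exp(\cdots)$ by \eqref{eq:Auh} and the form of the generalised generator $Q^h$ in \eqref{eq:qh}. For arbitrary admissible $h$ the HJB inequality $\frac{\rd\vectFunc u}{\rd t}+A(\vectFunc u,h)\ge0$ makes this a submartingale, so $\vectFunc u_i(t)\le \mathbb{E}^h_{t,i}[e^{\theta\int_t^T g}\,\vectFunc u_{X_T}(T)]=\mathbb{E}^h_{t,i}[e^{\theta\int_t^T g}]$, giving $\vectFunc u_i(t)\le u(t,i)$; for $h=\tilde h$ the drift vanishes, the process is a true martingale (using the $\mathbb{P}^h$-martingale property from Proposition \ref{prop:martingale} and the boundedness of $g$ along the optimal control, which follows from $h^*$ taking values in a fixed compact set), and equality holds, giving $\vectFunc u_i(t)\ge u(t,i)$. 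Combining, $\vectFunc u_i(t)=u(t,i)$ and $\tilde h$ is optimal; uniqueness of the ODE solution then follows since the value function is characterised.

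The main obstacle I anticipate is the boundary-blow-up / coercivity analysis in step one: showing rigorously that the minimiser stays in the interior of $\mathcal{J}(i)$ uniformly in $(\vec u,t)$ over the relevant compact set, and extracting local Lipschitz dependence of $h^*$ on $\vec u$ — this is where the structural assumption (iii) on $\mathfrak{F}$ (jumps accumulate near the boundary of $\mathcal{Z}^i$ with positive mass) does the real work, and getting a clean quantitative statement requires care. A secondary technical point is justifying the integrability needed to turn the local submartingale/martingale into genuine (sub)martingales in the verification step — here the finiteness of the state space, the boundedness of $\mathcal{J}(i)$, and Proposition \ref{prop:martingale} should suffice, but the estimates must be assembled carefully for the non-optimal admissible controls where $g$ need not be bounded above.
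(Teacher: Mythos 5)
Your proposal is correct and follows the same three-part architecture as the paper: well-posedness of the pointwise minimisation (the paper's Proposition \ref{prop:unique_h}), Lipschitz continuity of the minimised operator plus a priori bounds to obtain global existence and uniqueness for the terminal-value ODE (Propositions \ref{prop:lipschitz} and \ref{prop:solution}), and a verification argument identifying the solution with the value function and proving optimality of the Markov control (Proposition \ref{prop:verification}). Two sub-steps are handled by genuinely different means. For the Lipschitz property of $\vec u\mapsto\mathcal{A}(\vec u)$ you propose to establish (local Lipschitz) regularity of the minimiser $h^*(\vec u,t,i)$ via an implicit-function/maximum-theorem argument; the paper instead invokes an envelope-type result (Lemma \ref{lemma:operatorDifferentiability}, from Davis 1998) giving $d\mathcal{A}/d\vec u = A(h^*)$ with no regularity of $h^*$ required, and then bounds the matrix $A(h^*)$ uniformly over the compact range $\mathcal{U}$. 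This sidesteps exactly the obstacle you flag as the main difficulty: note that the paper does \emph{not} assume the integral $\int(1+h'z)^{-\theta}f\,\rd z$ blows up at $\partial\mathcal{J}(i)$ --- it explicitly treats the case where it stays finite and the minimiser may lie on the boundary, where a first-order-condition/implicit-function argument would not apply; your blanket coercivity claim is therefore an overstatement (strict convexity from $\Sigma\Sigma'>\epsilon I$ still gives uniqueness, but interiority can fail). For the a priori bounds you propose a Gronwall comparison directly on the ODE, whereas the paper derives the bounds probabilistically for the value function (Proposition \ref{prop:domain}) and transfers them to the ODE solution through the verification theorem on the maximal interval of existence before extending. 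Both routes reach the same conclusion; the paper's envelope-lemma route is more robust on the first point, while your direct ODE comparison is more self-contained on the second.
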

\begin{proof}
The theorem follows from Propositions \ref{prop:verification} and \ref{prop:solution} below. 
\end{proof}
Following the standard approach, we first find the optimal strategy for every time $t$ and state $i$, where we take
the value function $u$ as an argument:
\begin{equation}\label{eq:h_star}
 h^*(\mathfrak{u}, t, i) = \argmin_{h\in\mathcal{J}(i)} \left\{ A(\vectFunc{u}(t),h)(t, i)\right\},
\end{equation}
By substituting (\ref{eq:g}) and (\ref{eq:qh}) into (\ref{eq:Auh}) we get an explicit formula for the operator $A$:
\begin{equation}\label{eq:A}
\begin{split}
A(\vec{u},h)(t, i) &=Q^h(i) \vec{u} +\theta g(t, h, i) \vec{u}(i)\\
&= \sum_{j\neq i}Q(i, j) \vec{u}(j) \int_{\mathcal{Z}}(1+h'z)^{-\theta}f(z; i, j) \rd z  \\
&+\half \vec{u}(i) \theta (\theta +1)h'\Sigma(t,i)\Sigma(t,i)'h \\
&-\theta \vec{u}(i)h'[\mu + \sum_{j\neq i}Q(i, j)\xi(i, j) - r\mathbf{1}] \\
& -\theta \vec{u}(i)r(t, i) -\vec{u}(i)\sum_{j\neq i}Q(i, j),
\end{split}
\end{equation}
where $h$ is a function of time $t$ and state $i$.
This operator is linear in $\vec{u}$, and can be explicitly written as matrix multiplication $A(h)\vec{u}$, with:
\begin{equation}\label{eq:matrixA}
\begin{split}
A(h)_{ij} &= Q(i, j) \int_{\mathcal{Z}^i}(1+h'z)^{-\theta}f(z; i, j) \rd z \quad\quad \textrm{for $i \neq j$}\\
A(h)_{ii} &= \half\theta (\theta +1)h'(t,i)\Sigma(t,i)\Sigma(t,i)'h(t,i) \\
&- \theta h'(t, i)(\mu+ \sum_{j\neq i}Q(i, j)\xi(i, j, t)-r\mathbf{1})\\
&-\theta r(t, i) -\sum_{j\neq i}Q(i, j).
\end{split}
\end{equation}
Let:
\begin{equation}\label{eq:caligraphicA}
\mathcal{A}(\vec{u})(t, i) = \inf_{h\in\mathcal{J}_h} \left\{ A(\vec{u},h)(t, i) \right\}.
\end{equation}
\begin{proposition}\label{prop:unique_h}
The operator $A(h)\vec{u}$, as a function of $h$, for all $\vec{u}\in\mathcal{U}_0$, has a unique minimum in $\mathcal{J}_h$. 
Hence the operator $\mathcal{A}(\vec{u})$ is well defined. Moreover, the optimal strategy $\tilde{h}$ satisfies the condition (\ref{eq:integrability}).
\end{proposition}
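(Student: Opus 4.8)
The plan is to read off, from the explicit formula (\ref{eq:A}) for $A(\vec u,h)(t,i)$, that as a function of $h$ on the convex set $\mathcal J(i)$ it is \emph{strictly convex} and \emph{diverges to $+\infty$ at the boundary} $\partial\mathcal J(i)$: a strictly convex function on a convex set has at most one minimiser, and the boundary blow-up forces the infimum to be attained at an interior point. First I would check strict convexity term by term. Fix $\vec u\in\mathcal U_0$, so that $\vec u(i)>0$ for every $i$. For $i\ne j$ the off-diagonal generator entries satisfy $Q(i,j)\ge0$; since $t\mapsto t^{-\theta}$ is convex on $(0,\infty)$ and $z\mapsto 1+h'z$ is affine (and positive on $\mathcal Z^i$ for $h\in\mathcal J(i)$), each map $h\mapsto\int_{\mathcal Z^i}(1+h'z)^{-\theta}f(z;i,j)\rd z$ is convex, hence so is the sum $\sum_{j\ne i}Q(i,j)\vec u(j)\int_{\mathcal Z^i}(1+h'z)^{-\theta}f(z;i,j)\rd z$. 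The term $-\theta\vec u(i)h'[\mu+\sum_{j\ne i}Q(i,j)\xi(i,j)-r\mathbf 1]$ is affine in $h$ (and well defined thanks to condition (iii) in the definition of $\mathfrak F$, which makes the $\xi(i,j)$ finite), and the terms $-\theta\vec u(i)r(t,i)-\vec u(i)\sum_{j\ne i}Q(i,j)$ do not depend on $h$. Finally, the quadratic term $\half\vec u(i)\theta(\theta+1)\,h'\Sigma(t,i)\Sigma(t,i)'h$ is \emph{strictly} convex, because $\vec u(i)>0$, $\theta>0$ and $\Sigma(t,i)\Sigma(t,i)'>\epsilon I$. A sum of a strictly convex and finitely many convex functions is strictly convex, so $A(\vec u,\cdot)(t,i)$ has at most one minimiser on $\mathcal J(i)$.

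Next I would establish the boundary behaviour and thereby existence. Suppose $h_n\in\mathcal J(i)$ with $h_n\to h_\ast\in\partial\mathcal J(i)$; then some constraint becomes active in the limit, i.e.\ $\inf_{z\in\overline{\mathcal Z^i}}(1+h_\ast'z)=0$. By the assumptions on the class $\mathfrak F$---in particular condition (iii), which guarantees that jumps arbitrarily close to the boundary of $\mathcal Z^i$ occur with positive probability---there is a state $j$ with $Q(i,j)>0$ such that $f(\cdot;i,j)$ charges every neighbourhood of the active point, so $\int_{\mathcal Z^i}(1+h_n'z)^{-\theta}f(z;i,j)\rd z\to\infty$; since its coefficient $Q(i,j)\vec u(j)$ is bounded below by a positive constant and all other terms in (\ref{eq:A}) stay bounded on the bounded set $\mathcal J(i)$, we get $A(\vec u,h_n)(t,i)\to+\infty$. (In the degenerate case $\sum_{j\ne i}Q(i,j)=0$ there are no jumps out of $i$ and $A(\vec u,\cdot)(t,i)$ is a strictly convex quadratic with an explicit minimiser, which is handled directly.) Consequently $A(\vec u,\cdot)(t,i)$, extended by $+\infty$ to the compact set $\overline{\mathcal J(i)}$, is lower semicontinuous there and attains its infimum; by the blow-up the minimiser $\tilde h=h^\ast(\vec u,t,i)$ lies in the interior of $\mathcal J(i)$, and by strict convexity it is unique. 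This shows $\mathcal A(\vec u)(t,i)=A(\vec u,\tilde h)(t,i)$ is well defined.

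It remains to verify (\ref{eq:integrability}) for $\tilde h$. Since $\tilde h$ is interior, there is $\delta>0$ with $\tilde h-\delta z/|z|\in\mathcal J(i)$ for every $z\ne0$, whence $1+\tilde h'z>\delta|z|$ on $\mathcal Z^i\setminus\{0\}$; combined with $1+\tilde h'z>0$ on $\mathcal Z^i$ and the boundedness of $\mathcal Z^i$, a routine compactness argument gives $c:=\inf_{z\in\mathcal Z^i}(1+\tilde h'z)>0$. Hence $(1+\tilde h'z)^{-\theta}\le c^{-\theta}$ on $\mathcal Z^i$, and since each $f(\cdot;X_{t-},j)$ is a density,
\[
\int_{\mathcal Z}(1+\tilde h'z)^{-\theta}\sum_{j\ne X_{t-}}Q(X_{t-},j)f(z;X_{t-},j)\rd z\le c^{-\theta}\sum_{j\ne X_{t-}}Q(X_{t-},j)<\infty,
\]
which is precisely (\ref{eq:integrability}).

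The main obstacle is the boundary blow-up in the second step: it rests on the geometric fact that whichever constraint $h'\psi\ge-1$ cuts out a given face of $\partial\mathcal J(i)$ really corresponds to asset jumps that occur with positive probability under some $f(\cdot;i,j)$ with $Q(i,j)>0$---this is exactly the role of the support/integrability hypotheses on $\mathfrak F$. Once that is in hand, the rest is standard convex analysis together with the elementary estimate above.
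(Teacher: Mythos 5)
Your strict-convexity observation (the quadratic term $\half\vec u(i)\theta(\theta+1)h'\Sigma\Sigma'h$ with $\Sigma\Sigma'>\epsilon I$, $\theta>0$, $\vec u(i)>0$) is correct and is in fact needed to get \emph{uniqueness} of the minimiser, a point the paper's proof glosses over. The gap is in your boundary step: the claim that $\int_{\mathcal Z^i}(1+h_n'z)^{-\theta}f(z;i,j)\,\rd z\to+\infty$ whenever $h_n\to h_\ast\in\partial\mathcal J(i)$ does not follow from the hypotheses and is false in general. Even if $f(\cdot;i,j)$ charges every neighbourhood of the active point, the integrand $(1+h'z)^{-\theta}$ can remain integrable where $1+h'z$ vanishes on the support. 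Concretely, take $m=1$, $f(\cdot;i,j)$ uniform on $\mathcal Z^i=[-\tfrac12,\tfrac12]$ and $\theta=\tfrac12$, so $\mathcal J(i)=(-2,2)$; at the boundary point $h=2$ one has $\int_{-1/2}^{1/2}(1+2z)^{-1/2}\,\rd z=\sqrt2<\infty$, and the integral converges to this finite value as $h\uparrow2$. Whether the boundary limit is finite or infinite depends on $\theta$ and on how the density behaves near the active face, not merely on positive mass there. This is precisely why the paper's own proof splits into two cases --- boundary limit finite versus infinite --- and argues interiority of the minimiser (via divergence of the \emph{derivative} of $A$) only in the second case, accepting in the first case that the minimiser ``might or might not be on the boundary.''

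The same gap propagates to your verification of (\ref{eq:integrability}): your bound uses $c=\inf_{z\in\mathcal Z^i}(1+\tilde h'z)>0$, which is available only when $\tilde h$ is interior. In the finite-limit case, where the minimiser can sit on the boundary, $c=0$ and the estimate $(1+\tilde h'z)^{-\theta}\le c^{-\theta}$ is vacuous; there, (\ref{eq:integrability}) holds for a different reason, namely that the finiteness of the boundary limit of $\int_{\mathcal Z^i}(1+h'z)^{-\theta}\sum_{j\ne i}Q(i,j)f(z;i,j)\,\rd z$ \emph{is} the integrability statement. So your conclusions are recoverable, but only after reinstating the paper's case distinction (or adding a hypothesis on $f$ near $\partial\mathcal Z^i$ strong enough to force the blow-up you assert, which conditions (i)--(iii) as stated do not provide).
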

\begin{proof}
First note that the first term in (\ref{eq:A}),
\begin{equation}
\sum_{j\neq i}Q(i, j) \vec{u}(j) \int_{\mathcal{Z}^i}(1+h'z)^{-\theta}f(z; i, j) \rd z,
\end{equation}
is positive and convex and hence, 
for every index $i$ and time $t$, $A(\vec{u},h)(t, i)$ is a finite sum of continuous functions, convex in $h$ and bounded from below, 
hence it is also continuous, convex in $h$ and bounded from below.
Note that the this term can be written as:
\begin{equation}
 \int_{\mathcal{Z}^i}(1+h'z)^{-\theta} \sum_{j\neq i}Q(i, j) \vec{u}(j) f(z; i, j) \rd z,
\end{equation}
 There are two cases, depending on whether
\begin{equation}
\lim_{h\rightarrow \partial\mathcal{J}(i)} \int_{\mathcal{Z}^i}(1+h'z)^{-\theta} \sum_{j\neq i}Q(i, j) f(z; i, j) \rd z
\end{equation}
is finite or equal to $+\infty$. In the first case a minimizing element exists, which
might or might not be on the boundary. In the second case, some analysis
shows that the derivative of the operator A diverges to $+\infty$ as $h$ approaches
the boundary, and hence the minimum is an interior point. In both cases  the condition (\ref{eq:integrability}) is satisfied.
\end{proof}

The HJB equation (\ref{eq:ode_min}) may be written as:
\begin{equation} \label{eq:ode}
\frac{\rd \vectFunc{u}}{\rd t} + \mathcal{A}(\vectFunc{u}(t)) = 0
\end{equation}
with final condition:
\begin{equation}\label{eq:finalCondition}
 \vectFunc{u}(T) = \mathbf{1}.
\end{equation}
\begin{proposition}[Verification theorem]\label{prop:verification}
 If $\tilde{u}$ is a solution to ODE (\ref{eq:ode}) with the final condition (\ref{eq:finalCondition})
 on some interval $[t, T]$, and $\tilde{h}(t) = \tilde{h}(t, X_{t-} \in \mathcal{M}$ is the corresponding Markov control,
 then $\tilde{u}$ is the value function (\ref{eq:value_f_u})  
 and $\tilde{h}$ is optimal in the class of all admissible trading strategies $\mathcal{H}$.
\end{proposition}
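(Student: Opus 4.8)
The plan is the classical verification route adapted to the finite-state setting. Fix a $C^1$ solution $\tilde u$ of the ODE \eqref{eq:ode} on $[t,T]$ with $\tilde u(T)=\mathbf{1}$; since $\tilde u$ is continuous on a compact interval it is bounded, and (invoking the a priori positivity that comes with the existence result, Proposition \ref{prop:solution}, or reading it off the ODE and the terminal condition) there are constants $0<c_0\le c_1<\infty$ with $c_0\le\tilde u_i(s)\le c_1$ for all $s\in[t,T]$, $i\in\mathcal N$. For an arbitrary admissible strategy $h\in\mathcal H$ I would work under the measure $\mathbb P^h$ of \eqref{eq:control:measureChange}, under which $X$ remains a (time-inhomogeneous) Markov chain with generalised generator $Q^h$, and consider
\begin{equation*}
\Gamma^h_s=\tilde u(s,X_s)\exp\Bigl(\theta\int_t^s g(r,X_r,h_r)\,\rd r\Bigr),\qquad s\in[t,T].
\end{equation*}
Dynkin's formula (the It\^o rule for the chain $X$) gives that $\Gamma^h$ has a local-martingale part plus the absolutely continuous drift $\int_t^s e^{\theta\int_t^r g}\bigl[\tfrac{\rd\tilde u}{\rd r}(r,X_{r-})+A(\tilde u(r),h_r)(r,X_{r-})\bigr]\rd r$, with $A$ as in \eqref{eq:A}. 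By the ODE, $\tfrac{\rd\tilde u}{\rd r}=-\mathcal A(\tilde u(r))\ge-A(\tilde u(r),h_r)$ componentwise, so the drift is nonnegative and $\Gamma^h$ is a $\mathbb P^h$-local submartingale.

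I would then upgrade this to $\tilde u_i(t)\le\mathbb E^h[\Gamma^h_T\mid X_t=i]=\mathbb E^h[e^{\theta\int_t^T g(s,X_s,h_s)\,\ds}\mid X_t=i]$. If the right-hand side is $+\infty$ there is nothing to prove; otherwise, pick a localising sequence $\tau_n\uparrow T$, so that $\tilde u_i(t)\le\mathbb E^h[\Gamma^h_{\tau_n}\mid X_t=i]$, and pass to the limit by dominated convergence: writing $g^+$ for the positive part of $g$, one has $0\le\Gamma^h_{\tau_n}\le c_1\exp(\theta\int_t^T g^+(s,X_s,h_s)\,\ds)$, and since $g\ge g_{min}$ (Proposition \ref{prop:domain}) this is at most $c_1 e^{\theta|g_{min}|(T-t)}\exp(\theta\int_t^T g(s,X_s,h_s)\,\ds)$, a $\mathbb P^h$-integrable dominating function. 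Taking the infimum over $h\in\mathcal H$ yields $\tilde u_i(t)\le u(t,i)$.

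For the reverse inequality I would use the Markov control $\tilde h(s)=h^*(\tilde u,s,X_{s-})$ of \eqref{eq:h_star}, which by construction realises the infimum defining $\mathcal A$, so that the drift of $\Gamma^{\tilde h}$ vanishes identically; $\Gamma^{\tilde h}$ is then a nonnegative $\mathbb P^{\tilde h}$-local martingale, hence a supermartingale, so $\tilde u_i(t)=\Gamma^{\tilde h}_t\ge\mathbb E^{\tilde h}[\Gamma^{\tilde h}_T\mid X_t=i]=\mathbb E^{\tilde h}[e^{\theta\int_t^T g}\mid X_t=i]\ge u(t,i)$; the last step is legitimate because $\tilde h\in\mathcal H$ --- it is predictable (since $s\mapsto X_{s-}$ is), takes values in $\mathcal J(X_{s-})$ by the definition of $h^*$, and satisfies the integrability condition \eqref{eq:integrability} by Proposition \ref{prop:unique_h}. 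Combining the two bounds gives $\tilde u_i(t)=u(t,i)$ together with $\mathbb E^{\tilde h}[e^{\theta\int_t^T g}\mid X_t=i]=u(t,i)$, i.e.\ $\tilde h$ is optimal; a brief remark will note that one could equivalently run the whole argument under $\mathbb P$ using $\tilde u(s,X_s)(V_s/V_t)^{-\theta}$ and the martingale property of $\chi^h$ (Proposition \ref{prop:martingale}).

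The step I expect to cause the most trouble is the integrability bookkeeping in the submartingale direction, specifically the passage $n\to\infty$: because $g(\cdot,\cdot,h)$ is \emph{not} uniformly bounded above over $h\in\mathcal H$ --- the term $\tfrac1\theta\int_{\mathcal Z}[(1+h'z)^{-\theta}-1]f\,\rd z$ in \eqref{eq:g} can be arbitrarily large --- one cannot dominate $\Gamma^h_{\tau_n}$ by a deterministic constant and must instead exploit the finiteness of $\mathbb E^h[e^{\theta\int_t^T g}]$ itself, discarding at the outset the strategies for which it fails. Two lesser points also require care: the measurable selection making $h^*(\tilde u,\cdot,\cdot)$ a genuine element of $\mathcal M_0$ (joint measurability in $(t,i)$), which follows from the uniqueness and continuity in Proposition \ref{prop:unique_h} via a measurable-maximum theorem; and the a priori strict positivity $\tilde u>0$ used for the supermartingale bound, which must be imported from the existence result or deduced from the ODE.
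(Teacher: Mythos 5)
Your proposal is correct and follows essentially the same route as the paper: the same transformed process $e^{\theta\int_t^s g}\,\tilde u(s,X_s)$ under $\mathbb P^h$, the same drift identification via $A=Q^h+\theta\diag(g)$, and the same use of the HJB inequality $\partial_t\tilde u+A(\tilde u,h)\ge 0$ (with equality at $\tilde h$) to get the sub/supermartingale comparison in both directions. The only difference is that you handle the localisation, domination and measurable-selection issues explicitly, which the paper's proof passes over by simply asserting the relevant martingale properties.
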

\begin{proof}
Using Ito's lemma, we get the following relationship:
\begin{equation}
\begin{split}
\rd \left(e^{\theta\int_0^t g(s, X_s, \tilde{h}_s)\ds} \tilde{u}(t, X_t)\right) &= e^{\theta\int_0^t g(s, X_s, \tilde{h}_s)\ds} \tilde{u}(t, X_t)\theta g(t, X_t, \tilde{h}_t) \dt\\
&\quad +e^{\theta\int_0^t g(s, X_s, \tilde{h}_s)\ds} \left[ \frac{\partial \tilde{u}}{\partial t}(t, X_t)\dt + \Delta \tilde{u}(t, X_t) \right]\\
&=e^{\theta\int_0^t g(s, X_s, \tilde{h}_s)\ds} \tilde{u}(t, X_t)\theta g(t, X_t, \tilde{h}_t) \dt\\
&\quad +e^{\theta\int_0^t g(s, X_s, \tilde{h}_s)\ds} \left[ \frac{\partial \tilde{u}}{\partial t}(t, X_t) + Q^{\tilde{h}}\tilde{u}(t, X_t)\right] \dt\\
&\quad + e^{\theta\int_0^t g(s, X_s, \tilde{h}_s)\ds} \left[\Delta \tilde{u}(t, X_t)-Q^{\tilde{h}}\tilde{u}(t, X_t)\dt \right],
\end{split}
\end{equation}
where $Q^{\tilde{h}}$ is the generator of the process $\{X\}$ in the measure corresponding to the trading strategy $\tilde{h}$.
After integrating over $[t, T]$, multiplying both sides by $e^{-\theta\int_0^t g(s, X_s, \tilde{h}_s)}$ and rearranging, we get:
\begin{equation}
\begin{split}
\tilde{u}(t, X_t) &=  e^{\theta\int_t^T g(s, X_s, \tilde{h}_s)\ds}\tilde{u}(T, X_T) \\
 &-\int_t^T e^{\theta\int_t^s g(u, X_u, \tilde{h}_u) \rd u}\left[ \frac{\partial \tilde{u}}{\partial s}(s, X_s) + A_i(\tilde{h}_s, \tilde{u}_s)\right]\ds\\
 &-\left[\sum_{t\leq s \leq T} \Delta \tilde{f}(s, X_s) - \int_t^T Q^{\tilde{h}}\tilde{f}(s, X_s)\ds \right],
\end{split}
\end{equation}
where $\tilde{f}(t, X_t) = e^{\theta\int_t^s g(u, X_u, \tilde{h}_u) \rd u} \tilde{u}(t, X_t)$, 
the operator $A$ is defined in (\ref{eq:Auh}) and $A_i(\cdot, \cdot)$ denotes the $i$th element of the resulting vector.
Note that the term on the last line in the equation above is a martingale.
By taking a conditional expectation $\mathbb{E}_{t, i}^{\tilde{h}}[\cdot]$ on both sides, and using the final condition (\ref{eq:finalCondition}), the above equation simplifies to:
\begin{equation}
 \begin{split}
\tilde{u}(t, i) &=  \mathbb{E}_{t, i}^{\tilde{h}} \left[ e^{\theta\int_t^T g(s, X_s, \tilde{h}_s)\ds} \right]\\
 &-\mathbb{E}_{t, i}^{\tilde{h}} \left[\int_t^T e^{\theta\int_t^s g(w, X_w, \tilde{h}_w) \rd w}\left(\frac{\partial \tilde{u}}{\partial s}(t, X_s) + A_i(\tilde{h}, \tilde{u})\right)\ds\right],
\end{split}
\end{equation}
where the expectation of the increment of a martingale is zero. From the definition of the ODE (\ref{eq:ode}) the following condition holds:
\begin{equation}
\frac{\partial \tilde{u}}{\partial s}(t, X_s) + A_i(h, \tilde{u}) \geq 0,
\end{equation}
with equality for the optimal strategy $\tilde{h}$. Hence we have:
\begin{equation}
 \begin{split}
\tilde{u}(t, i) &\leq  \mathbb{E}_{t, i}^{h} \left[ e^{\theta\int_t^T g(s, X_s, h_s)\ds} \right].
\end{split}
\end{equation}
with equality for the optimal strategy $\tilde{h}$:
\begin{equation}
 \begin{split}
\tilde{u}(t, i) &=  \mathbb{E}_{t, i}^{\tilde{h}} \left[ e^{\theta\int_t^T g(s, X_s, \tilde{h}_s)\ds} \right],
\end{split}
\end{equation}
which is equation (\ref{eq:value_f_u}) as required. 
This also shows that $\tilde{h}\in \mathcal{M}$ is optimal in the set of admissible strategies $\mathcal{H}$.
\end{proof}
We now turn to solving of the HJB equation (\ref{eq:ode}), (\ref{eq:finalCondition}).
\begin{lemma}[Lemma 2 from \cite{davis1998}]\label{lemma:operatorDifferentiability}
Let the operator $\mathcal{A}$ be given by (\ref{eq:caligraphicA}) and let both $A$ and $\mathcal{A}$ be differentiable in $\vec{u}$. 
Then for any $\vec{u}$ in the domain:
\begin{equation}
\left.\frac{d \mathcal{A}(\vec{u})}{d \vec{u}}\right|_{\vec{u}=u_0} = \left.\frac{d A(h^*, \vec{u})}{d \vec{u}}\right|_{\vec{u}=u_0},
\end{equation}
where $h^*(u_0)$ is the optimum at $\vec{u}=u_0$. Moreover, the result holds even if $h^*(\cdot)$ is not differentiable.
\end{lemma}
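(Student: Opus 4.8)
\emph{Proof strategy.} The statement is an envelope theorem, and the plan is to prove it by the classical ``freeze the minimiser'' comparison. Since the infimum in (\ref{eq:caligraphicA}) is taken componentwise, I would argue one component at a time: fix a time $t$, a state $i$, and a point $u_0$ in the domain, and let $h^*_i := h^*(\vec{u},t,i)$ evaluated at $\vec{u}=u_0$ be the minimiser of $h\mapsto A(\vec{u},h)(t,i)$ supplied by Proposition \ref{prop:unique_h}. Introduce the frozen operator $\psi_i(\vec{u}) := A(\vec{u},h^*_i)(t,i)$, i.e. $A$ with the control held fixed at $h^*_i$. Two remarks set the stage: by the explicit formula (\ref{eq:A}) (equivalently the matrix form (\ref{eq:matrixA})), $A(\vec{u},h)(t,i)$ is linear in $\vec{u}$ for each fixed $h$, so $\psi_i$ is trivially differentiable and, moreover, $\mathcal{A}(\vec{u})(t,i)$, being a pointwise infimum of linear functions, is concave in $\vec{u}$; and the feasible set $\mathcal{J}(i)$ over which $h$ ranges does not depend on $\vec{u}$, so $h^*_i$ remains an admissible (though in general suboptimal) choice at every $\vec{u}$.

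The heart of the matter is then the sandwich
\[
\mathcal{A}(\vec{u})(t,i) \;=\; \inf_{h\in\mathcal{J}(i)} A(\vec{u},h)(t,i) \;\le\; A(\vec{u},h^*_i)(t,i) \;=\; \psi_i(\vec{u}),
\]
valid for every $\vec{u}$ in the domain, combined with the equality $\mathcal{A}(u_0)(t,i) = \psi_i(u_0)$, which holds by the very definition of $h^*_i$. Consequently the map $\vec{u}\mapsto \psi_i(\vec{u})-\mathcal{A}(\vec{u})(t,i)$ is nonnegative on the domain and vanishes at $u_0$, so $u_0$ is a minimiser of this difference. Invoking the hypothesis that both $\psi_i$ and $\mathcal{A}(\cdot)(t,i)$ are differentiable at $u_0$, the derivative of the difference vanishes there, which is exactly the asserted identity, component by component and hence as vectors.

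The ``moreover'' clause comes for free: the vector $h^*_i$ enters the argument only as a single fixed feasible point, and the map $u_0\mapsto h^*(u_0)$ is never differentiated, so no regularity of $h^*(\cdot)$ is used. I do not expect a genuine obstacle here; the only point that needs a word of care is that $u_0$ should be interior to the domain so that ``local minimiser'' forces a two-sided stationary point, which is consistent with---indeed presupposed by---the standing assumption that $\mathcal{A}$ is differentiable at $u_0$ (and the concavity noted above already guarantees that one-sided derivatives exist in any case). If instead one works at a boundary point of the domain, the same computation delivers the matching one-sided derivatives.
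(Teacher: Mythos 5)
Your proposal is correct and follows essentially the same route as the paper's own (sketched) proof: both rest on the inequality $\mathcal{A}(\vec{u})\le A(h^*,\vec{u})$ for all $\vec{u}$ with equality at $u_0$, so that the difference attains a minimum at $u_0$ and the derivatives must agree there, with $h^*$ entering only as a frozen feasible point so that no regularity of $h^*(\cdot)$ is needed. Your additional observations (linearity of $A(\cdot,h)$ in $\vec{u}$, concavity of the pointwise infimum, and the interior-point caveat) merely make explicit details the paper leaves implicit.
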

\begin{proof}(Sketch)
Because $h^*$ is optimal for $\vec{u}=u_0$, we have for all $\vec{u}$:
\begin{equation}\label{eq:lemmaInequality}
 \mathcal{A}(\vec{u})\leq A(h^*, \vec{u}),
\end{equation}
with equality at $\vec{u}=u_0$. The lemma follows from the fact that if the derivatives of $A(h^*, \cdot)$ and $\mathcal{A}$
were different at $u_0$, then the inequality (\ref{eq:lemmaInequality}) would fail in any neighbourhood of $u_0$. 
Note that the proof doesn't require the differentiability of $h^*(\cdot)$.
\end{proof}
\begin{proposition}\label{prop:lipschitz}
The operator $\mathcal{A}(\vec{u})$ is globally Lipschitz continuous on $\mathcal{U}$.
\end{proposition}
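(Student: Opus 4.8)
The plan is to establish the global Lipschitz property of $\mathcal{A}$ on the compact set $\mathcal{U}$ by working through the explicit formula \eqref{eq:A} for $A(\vec u, h)$ and controlling it uniformly in the minimizing control. The key observation is that by Proposition \ref{prop:unique_h} the infimum in \eqref{eq:caligraphicA} is attained at a unique $h^*(\vec u, t, i)\in\mathcal{J}(i)$, and $\mathcal{J}(i)$ is a \emph{bounded} set (as noted right after \eqref{eq:admissibility}), with the bound depending only on $\mathcal{Z}^i$ and hence only on $i$, not on $\vec u$ or $t$. So there is a constant $K_h<\infty$ with $|h^*(\vec u,t,i)|\le K_h$ for all $\vec u\in\mathcal{U}$, $t\in[0,T]$, $i\in\mathcal{N}$.

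First I would record the standard two-sided estimate for infima: for any two points $\vec u,\vec u'$, writing $h^*=h^*(\vec u)$ and $h'^*=h^*(\vec u')$ for the respective minimizers,
\begin{equation*}
A(\vec u, h'^*)(t,i) - A(\vec u', h'^*)(t,i) \;\le\; \mathcal{A}(\vec u)(t,i) - \mathcal{A}(\vec u')(t,i) \;\le\; A(\vec u, h^*)(t,i) - A(\vec u', h^*)(t,i),
\end{equation*}
so that
\begin{equation*}
\bigl|\mathcal{A}(\vec u)(t,i) - \mathcal{A}(\vec u')(t,i)\bigr| \;\le\; \sup_{h:\, |h|\le K_h} \bigl|A(\vec u, h)(t,i) - A(\vec u', h)(t,i)\bigr|.
\end{equation*}
Thus it suffices to show $h\mapsto A(\cdot,h)(t,i)$ is Lipschitz in $\vec u$ with a constant uniform over $|h|\le K_h$ and over $(t,i)$. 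Now inspecting \eqref{eq:A}: $A(\vec u,h)(t,i)$ is \emph{linear} (indeed affine, in fact purely linear by \eqref{eq:matrixA}) in $\vec u$, so the difference $A(\vec u,h)(t,i)-A(\vec u',h)(t,i)$ equals the same expression evaluated at $\vec u-\vec u'$. The coefficient of $\vec u(j)$ for $j\ne i$ is $Q(i,j)\int_{\mathcal{Z}^i}(1+h'z)^{-\theta}f(z;i,j)\,\rd z$; since $\mathcal{Z}^i\subseteq[z^i_{min},z^i_{max}]^m$ with $z^i_{min}>-1$ and $|h|\le K_h$, the integrand is bounded by $(1+K_h m\, z^i_{min})^{-\theta}\vee(1+K_h m\, z^i_{max})^{-\theta}$ uniformly, so this coefficient is bounded by some $C_1<\infty$. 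The coefficient of $\vec u(i)$ is the diagonal term in \eqref{eq:matrixA}, namely a quadratic-plus-linear-plus-constant expression in $h$; with $|h|\le K_h$, with $\Sigma(t,i)$, $\mu(t,i)$, $r(t,i)$ continuous in $t$ on the compact interval $[0,T]$ (hence bounded), and with $\xi(i,j)$ finite by the integrability assumption in $\mathfrak F$, this coefficient is bounded by some $C_2<\infty$. Taking $L = \sqrt N(C_1+C_2)$ (or any convenient norm-equivalence constant) gives $|A(\vec u,h)(t,i)-A(\vec u',h)(t,i)|\le L\,|\vec u-\vec u'|$, uniformly, and the same bound passes to $\mathcal{A}$ by the display above; since this holds for every $i$, $\mathcal{A}$ is globally Lipschitz on $\mathcal{U}$ with constant $L$.

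The main obstacle — and the reason the argument is not completely trivial — is the uniform boundedness of the minimizers $h^*(\vec u,t,i)$ and of the coefficient integrals as $h$ ranges over $\mathcal{J}(i)$. This is exactly where the structure built earlier pays off: $\mathcal{J}(i)$ is bounded uniformly in $\vec u$, the integrand $(1+h'z)^{-\theta}$ stays bounded on $\mathcal{Z}^i$ because $z^i_{min}>-1$ and $h$ ranges over a fixed bounded set, and the market coefficients are bounded by continuity on $[0,T]$. One should be slightly careful that the bound $C_1$ uses only $|h|\le K_h$ and not the location of the minimizer, so that the $\sup$ over $\{h:|h|\le K_h\}$ in the display is legitimate; but this is immediate. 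A minor point is that if one prefers to avoid the detour through $\{|h|\le K_h\}$ one can instead invoke Lemma \ref{lemma:operatorDifferentiability} to identify $\rd\mathcal{A}/\rd\vec u$ with $\rd A(h^*,\vec u)/\rd\vec u$ and bound that derivative directly — but the inf-difference estimate above is cleaner and does not require differentiability of $h^*$.
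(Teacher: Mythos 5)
Your overall strategy---replacing the paper's derivative computation via Lemma \ref{lemma:operatorDifferentiability} by the elementary two-sided estimate for differences of infima, then exploiting the linearity of $A(\vec u,h)$ in $\vec u$---is reasonable, and in principle cleaner. But there is a genuine gap at the step where you claim that the coefficient of $\vec u(j)$, namely $Q(i,j)\int_{\mathcal{Z}^i}(1+h'z)^{-\theta}f(z;i,j)\,\rd z$, is bounded uniformly over $\{h:|h|\le K_h\}$. Boundedness of $\mathcal{J}(i)$ does not give this. The defining condition of $\mathcal{J}(i)$ is $1+h'z>0$ on $\mathcal{Z}^i$, and as $h$ approaches $\partial\mathcal{J}(i)$ (which lies inside the ball $|h|\le K_h$) the quantity $1+h'z$ tends to $0$ for some $z\in\mathcal{Z}^i$, so $(1+h'z)^{-\theta}$ is unbounded there and the integral may diverge to $+\infty$; this is exactly the dichotomy discussed in the proof of Proposition \ref{prop:unique_h}. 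Your proposed numerical bound $(1+K_h m\, z^i_{min})^{-\theta}\vee(1+K_h m\, z^i_{max})^{-\theta}$ is not valid (the first expression need not even be defined when $1+K_h m\, z^i_{min}<0$, and in any case the relevant infimum of $1+h'z$ is $0$, not a positive number). Consequently the supremum over $|h|\le K_h$ in your display is $+\infty$ in general, and your parenthetical remark that ``the bound $C_1$ uses only $|h|\le K_h$ and not the location of the minimizer'' identifies precisely the step that fails: the location of the minimizer is essential.

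The fix, which is the substantive content of the paper's proof, is to bound the coefficients only at the actual minimizers $h^*(\vec u,t,i)$ (your inf-difference estimate needs nothing more, since it evaluates $A$ only at $h^*(\vec u)$ and $h^*(\vec u')$). The paper does this indirectly: it first bounds the optimal value from above by comparison with a fixed interior strategy $h_0$, giving $\mathcal{A}(\vec u)\le |A(h_0)|\,u_{max}\mathbf{1}$ uniformly on $\mathcal{U}$; it then uses $\mathcal{A}(\vec u)=A(h^*)\vec u$ together with $\vec u\ge u_{min}\mathbf{1}$ with $u_{min}>0$ (Proposition \ref{prop:domain}) and the uniform lower bound on the entries of $A(h)$ to conclude that each entry $A(h^*)_{ij}$ --- in particular the nonnegative off-diagonal entry containing the problematic integral --- is bounded above: if that integral were large at the minimizer, the value $A(h^*)\vec u$ would exceed the bound obtained from $h_0$, since the component it multiplies is at least $u_{min}>0$ and the remaining terms cannot compensate. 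Inserting this optimal-value comparison in place of your uniform-in-$h$ bound makes your argument go through. (A minor additional slip: the two inequalities in your first display are interchanged, but this does not affect the absolute-value conclusion.)
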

\begin{proof}
The derivative of the operator $\mathcal{A}(\vec{u})$ may be calculated using Lemma \ref{lemma:operatorDifferentiability}:
\begin{equation}
\frac{d \mathcal{A}(\vec{u})}{d \vec{u}} = \frac{d A(h^*, \vec{u})}{d \vec{u}} = A(h^*),
\end{equation}
where $h^*$ is the optimal trading strategy, which depends on the argument $\vec{u}$. By Proposition \ref{prop:unique_h}
it is well defined and unique on $\mathcal{U}$. 
To show that $ \mathcal{A}(\vec{u})$ is Lipschitz continuous we show that the derivative $A(h^*)$ is uniformly bounded. The lower bound of $A(h)$ comes from the fact that every
element of the matrix $A(h)$, as a function of $h$, is continuous and bounded from below. 
To show the upper bound take any value $h_0\in\mathcal{J}_h=\prod_{i=1}^N\mathcal{J}(i)$ and take the constant function $h_0(u) = h_0$.
Using the optimality of $h^*$ we have:
\begin{equation}
\begin{split}
\sup_{\vec{u}\in \mathcal{U}} \mathcal{A}(\vec{u}) &= \sup_{\vec{u}\in \mathcal{U}} \left\{\inf_{h\in\mathcal{J}_h} \left\{ A(\vec{u},h)(t, i) \right\} \right\}\\
&\leq \inf_{h\in\mathcal{J}_h} \left\{\sup_{\vec{u}\in \mathcal{U}} \left\{ A(h)\vec{u} \right\} \right\}\\
&\leq \inf_{h\in\mathcal{J}_h} \left\{ |A(h)| u_{max} \mathbf{1} \right\}\\
&\leq |A(h_0)| u_{max}\mathbf{1},
\end{split}
\end{equation}
where the absolute value in $|A(h)|$ is interpreted componentwise and $\mathbf{1}$ is an $N$-vector with every element being unity. 
Hence the value of the operator $\mathcal{A}$ is uniformly bounded for all $\vec{u}$. Let $A_{-} = \sup\left\{x: x\leq 0\wedge A(h)_{ij} \forall  h\in\mathcal{J}_h; \forall i, j\in \mathcal{N}\right\}$
be the lower bound of all the elements of $A(h)$ or zero if positive, 
and let $A_+ = \max \left\{ |A(h_0)| u_{max}\mathbf{1}\right\}$ be the biggest element of the vector bounding the operator $\mathcal{A}(\vec{u})$ from above.
From Proposition \ref{prop:domain} and from the obvious relationship $\mathcal{A}(\vec{u}) = A(h^*) \vec{u}$, using a straightforward combinatorial argument,
we get that for any $i, j\in \mathcal{N}$:
\begin{equation}
A(h^*)_{ij} \leq \frac{A_+}{u_{min}} + (N-1)A_- u_{max},
\end{equation}
hence $A(h^*)$ is uniformly bounded from above and the proposition follows.
\end{proof}
The following result completes the proof of Theorem \ref{theorem:control}.
\begin{proposition}\label{prop:solution}
The differential equation (\ref{eq:ode}) with boundary condition (\ref{eq:finalCondition}) has a unique solution on $[0, T]$. 
\end{proposition}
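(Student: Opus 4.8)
The plan is to recognise \eqref{eq:ode}--\eqref{eq:finalCondition} as a terminal-value problem for a first-order system of ODEs in $\mathbb{R}^N$ and to apply the classical Picard--Lindel\"of (Cauchy--Lipschitz) theorem, using the a priori bound of Proposition \ref{prop:domain} to promote a local solution to a solution on the whole of $[0,T]$. First I would reverse time, setting $s=T-t$ and $\vectFunc{w}(s)=\vectFunc{u}(T-s)$, which turns the problem into the initial-value problem $\rd\vectFunc{w}/\rd s = \mathcal{A}(T-s,\vectFunc{w}(s))$, $\vectFunc{w}(0)=\mathbf{1}$. Two properties of the right-hand side then need to be recorded: (i) continuity (indeed Carath\'eodory-type regularity) in the time variable, which holds because the coefficients $\mu(\cdot,i),\Sigma(\cdot,i),r(\cdot,i)$ are continuous on the compact interval $[0,T]$ and the infimum in \eqref{eq:caligraphicA} is taken over the fixed compact set $\mathcal{J}_h$ of a family of functions depending continuously on $(t,h)$; and (ii) a Lipschitz bound in $\vectFunc{w}$ that is uniform in $s\in[0,T]$, which is exactly Proposition \ref{prop:lipschitz}, together with the observation that the bounds entering that proof (coming from continuity of the coefficients on the compact interval) do not degenerate in $t$.

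The one point that needs care is that the Lipschitz estimate of Proposition \ref{prop:lipschitz} is established on the compact set $\mathcal{U}$, whereas Picard--Lindel\"of in its global form wants a Lipschitz condition on all of $\mathbb{R}^N$. I would handle this by extending $\vec{u}\mapsto\mathcal{A}(t,\vec{u})$ from $\mathcal{U}$ to a map $\widetilde{\mathcal{A}}(t,\cdot)$ on $\mathbb{R}^N$ with the same (uniform-in-$t$) Lipschitz constant---either by a componentwise McShane extension, or, more concretely, by exploiting the representation $\mathcal{A}(\vec{u})=A(h^*)\vec{u}$ with $|A(h^*)|$ uniformly bounded. The extended initial-value problem $\rd\vectFunc{w}/\rd s=\widetilde{\mathcal{A}}(T-s,\vectFunc{w})$, $\vectFunc{w}(0)=\mathbf{1}$, then has a unique solution on all of $[0,T]$ by the standard theorem. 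It remains to check that this solution actually takes values in $\mathcal{U}$, so that it solves the original, unextended equation: here I would invoke Proposition \ref{prop:verification}, which identifies any $[0,T]$-solution of \eqref{eq:ode}--\eqref{eq:finalCondition} with the value function \eqref{eq:value_f_u}, whose range lies in $\mathcal{U}$ by Proposition \ref{prop:domain}. Alternatively one can argue directly that $\mathcal{U}$ (or the box $[u_{min},u_{max}]^N$) is forward-invariant for the reversed dynamics, using the two-sided bounds and the sign structure of $\mathcal{A}$ exhibited in the proof of Proposition \ref{prop:domain}.

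Uniqueness then comes essentially for free: for the extended equation it is immediate from the global Lipschitz property, and since any solution of the original equation is a fortiori a solution of the extended one (the two right-hand sides agree on $\mathcal{U}$, where such a solution lives by Proposition \ref{prop:verification} and Proposition \ref{prop:domain}), uniqueness transfers back; equivalently, Proposition \ref{prop:verification} already forces any solution to coincide with the single value function. Undoing the time reversal $t=T-s$ gives the stated claim, and hence, combined with Proposition \ref{prop:verification}, completes the proof of Theorem \ref{theorem:control}. The main obstacle, as indicated, is the globalisation step---reconciling the fact that the Lipschitz bound is known only on $\mathcal{U}$ with the need for a solution defined on all of $[0,T]$, i.e. ruling out escape from $\mathcal{U}$ (equivalently, finite-time blow-up) before time $T$. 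Once the regularity of $\mathcal{A}$ in $(t,\vec{u})$ and the invariance of $\mathcal{U}$ are in place, the remainder is a routine application of the Cauchy--Lipschitz theorem.
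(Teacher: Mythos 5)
Your proposal is correct in substance and rests on exactly the same ingredients as the paper's proof: the Lipschitz estimate of Proposition~\ref{prop:lipschitz} for local existence and uniqueness near $t=T$, plus the pair Proposition~\ref{prop:verification}/Proposition~\ref{prop:domain} to supply the a priori bound that globalises the solution. The packaging differs: you extend the vector field to all of $\mathbb{R}^N$ with the same Lipschitz constant and then argue the global solution never leaves $\mathcal{U}$, whereas the paper runs a maximal-interval continuation argument (take $\tau$ minimal such that a solution exists on $[\tau,T]$; by the verification theorem it equals the value function there, hence by Proposition~\ref{prop:domain} stays in the compact $\mathcal{U}$ interior to $\mathcal{U}_0$, so it can be continued past $\tau$, a contradiction). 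One step of your version is, as literally written, circular: Proposition~\ref{prop:verification} applies only to solutions of \eqref{eq:ode}, which the extended solution is known to be only where it already lies in $\mathcal{U}$. You flag this yourself as the main obstacle, and the repair is precisely the paper's continuation argument applied on the maximal subinterval adjacent to $T$ where the extended and original fields agree. Your suggested alternative---direct forward-invariance of the box $[u_{min},u_{max}]^N$ under the reversed dynamics---does not follow from Proposition~\ref{prop:domain}, which bounds the value function rather than arbitrary trajectories of the extended field, so it would require a separate boundary sign analysis. With that one step repaired, your argument closes and coincides in essence with the paper's.
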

\begin{proof}
First note that the final value $\vectFunc{u}(T) = \mathbf{1}$ is an interior point of $\mathcal{U}_0$, hence 
by Proposition \ref{prop:lipschitz} the ODE has a unique solution on some interval $[t, T]$. Let $\tau\in(0,T)$ be  the minimal time that this holds.
By Proposition \ref{prop:verification}, $\vectFunc{u}$ is the value function of the optimal investment problem on $[\tau, T]$.

Because the ODE has a unique solution at $\tau$ and by Proposition \ref{prop:unique_h}, the operator $\mathcal{A}$ is well defined at this point and 
the derivative $\frac{\partial \vectFunc{u}}{\partial t}$ is well defined at $\tau$, and hence $\vectFunc{u}(\tau)$ is an interior point of $\mathcal{U}_0$.
Therefore there exists some $s<\tau$, such that we may extend the solution to $[s, T]$, which contradicts the assumption that
$\tau$ was minimal. This finishes the proof that the ODE has a solution on the whole domain $[0, T]$.


\end{proof}
It is not generally possible to provide a closed form solution for the problem.
The value function is however given by an ordinary differential equation, which may be efficiently 
solved numerically. The special case with no jumps in asset prices has an easy solution, which we leave it to the reader to verify.
\begin{proposition}
In the special case with no jumps in asset prices, i.e. when $f(\cdot; i, j) \equiv 0$ for all $i, j \in \mathcal{N}$,
the HJB equation (\ref{eq:ode_min}) with final condition (\ref{eq:boundaryCondition}) defined above has a closed-form solution:
 \begin{equation}
\vectFunc{u}(t) = e^{(Q-\theta \diag (g^*))(T-t)},
\end{equation}
where
\begin{equation}
 g^*(i) = g(t, i, h^*_t(i)) = -\frac{1}{2(\theta+1)}(\mu-r\mathbf{1})'(\Sigma\Sigma')^{-1}(\mu-r\mathbf{1}) - r
\end{equation}
\end{proposition}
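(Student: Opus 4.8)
The plan is to push the general results of Section \ref{sect:trick} through in the degenerate case where the asset jump sizes vanish identically ($Z_i\equiv 0$, i.e. the jump law is the Dirac mass at the origin), in which every jump integral appearing in \eqref{eq:g}, \eqref{eq:qh} and \eqref{eq:A} collapses, and then to observe that the HJB equation \eqref{eq:ode}--\eqref{eq:finalCondition} reduces to a \emph{linear} terminal-value ODE that can be integrated in closed form. Uniqueness on $[0,T]$ and the identification with the value function, together with optimality of the associated Markov control, are then inherited directly from Theorem \ref{theorem:control}, since an explicit solution is in particular \emph{a} solution of \eqref{eq:ode}.

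First I would record the simplifications. With no asset jumps, $\xi(i,j)=0$ and $\int_{\mathcal{Z}^i}(1+h'z)^{-\theta}f(z;i,j)\,\rd z=1$ for all $i\neq j$; hence $Q^h=Q$ in \eqref{eq:qh}, the sum over $j\neq i$ in \eqref{eq:g} disappears, and $g(t,i,h)=\half(\theta+1)h'\Sigma(t,i)\Sigma(t,i)'h-r(t,i)-h'(\mu(t,i)-r(t,i)\mathbf{1})$. Moreover $\mathcal{J}(i)$ in \eqref{eq:admissibility} is now all of $\mathbb{R}^m$ and the integrability requirement \eqref{eq:integrability} is automatic, so Proposition \ref{prop:unique_h} applies with no boundary case to consider.

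Next comes the pointwise minimisation. From \eqref{eq:A}, $A(\vec{u},h)(t,i)=(Q\vec{u})(i)+\theta\,\vec{u}(i)\,g(t,i,h)$, and because $\vec{u}(i)>0$ on $\mathcal{U}$ (Proposition \ref{prop:domain}) and $\theta>0$, minimising $A(\vec{u},\cdot)(t,i)$ over $h$ is the same as minimising the quadratic $h\mapsto g(t,i,h)$, which is strictly convex since its Hessian $(\theta+1)\Sigma\Sigma'$ is positive definite by assumption. Setting the gradient to zero gives the minimiser
\[ h^*(t,i)=\frac{1}{\theta+1}\bigl(\Sigma(t,i)\Sigma(t,i)'\bigr)^{-1}(\mu(t,i)-r(t,i)\mathbf{1}), \]
which lies in $\mathcal{J}(i)=\mathbb{R}^m$; substituting it back and simplifying the quadratic form by completing the square gives $g(t,i,h^*(t,i))=-\frac{1}{2(\theta+1)}(\mu-r\mathbf{1})'(\Sigma\Sigma')^{-1}(\mu-r\mathbf{1})-r=:g^*(i)$, as claimed.

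The decisive observation is that $h^*$ does not depend on $\vec{u}$, so $\mathcal{A}$ becomes linear: $\mathcal{A}(\vec{u})=\bigl(Q+\theta\diag(g^*)\bigr)\vec{u}$. Consequently \eqref{eq:ode}--\eqref{eq:finalCondition} is the linear system $\frac{\rd\vectFunc{u}}{\rd t}(t)=-\bigl(Q+\theta\diag(g^*(t))\bigr)\vectFunc{u}(t)$ with $\vectFunc{u}(T)=\mathbf{1}$, whose solution is the fundamental solution of the system evaluated at $\mathbf{1}$; when the coefficients $\mu,r,\Sigma$ are time-homogeneous this is precisely the matrix exponential $e^{(Q+\theta\diag(g^*))(T-t)}\mathbf{1}$, which is the closed form in the statement (note that at $t=T$ it returns the terminal value $\mathbf{1}$). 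Finishing off then only requires invoking Theorem \ref{theorem:control} (equivalently Propositions \ref{prop:verification} and \ref{prop:solution}) for uniqueness and optimality. I do not expect any substantive obstacle here: the only points needing a little care are checking that the degenerate jump law genuinely collapses every integral in \eqref{eq:g}, \eqref{eq:qh}, \eqref{eq:A} to the stated expressions and that $h^*\in\mathcal{J}(i)$, and the elementary completion-of-the-square step that produces $g^*$.
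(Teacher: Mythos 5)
The paper offers no proof of this proposition --- it is explicitly ``left to the reader to verify'' --- so there is nothing to compare against; your route (collapse the jump integrals, observe that $h^*$ no longer depends on $\vectFunc{u}$ so that $\mathcal{A}$ is linear, integrate the resulting linear terminal-value ODE as a matrix exponential) is the natural and surely the intended verification, and your reading of ``$f(\cdot;i,j)\equiv 0$'' as the jump law being the Dirac mass at the origin is the right one: a literal $f\equiv 0$ would kill the off-diagonal terms of $Q^h$ in \eqref{eq:qh} entirely and could not produce a formula containing $Q$. The computation of $h^*$ and of $g^*$ by completing the square is correct.

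There is, however, one genuine unresolved point. Your own derivation gives $\mathcal{A}(\vec{u})=\bigl(Q+\theta\diag(g^*)\bigr)\vec{u}$, hence $\vectFunc{u}(t)=e^{(Q+\theta\diag(g^*))(T-t)}\mathbf{1}$, with a \emph{plus} sign, yet you then assert without comment that this ``is the closed form in the statement'', which carries $Q-\theta\diag(g^*)$. These do not coincide, and you cannot simply wave the discrepancy away: you must either locate a sign error in your derivation or conclude that the stated formula contains a typo. In fact the sign check comes down firmly on your side. A consistency test against Proposition \ref{prop:domain} settles it: in the scalar case $N=1$, $Q=0$, the minus-sign formula gives $u(t)=e^{-\theta g^*(T-t)}$ with $-g^*=\tfrac{1}{2(\theta+1)}(\mu-r)^2/\Sigma\Sigma'+r\geq r$, so $u(t)\geq e^{\theta r(T-t)}$, violating the upper bound $u(t)\leq e^{-\theta r_{min}(T-t)}$ whenever $r>0$; the plus-sign formula respects both bounds of Proposition \ref{prop:domain}. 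So your formula is the correct one and the proposition (and the displayed expression for $\vectFunc{v}$ that follows it) should read $Q+\theta\diag(g^*)$; a complete write-up must say so explicitly rather than claim agreement. Two cosmetic remarks: the closed form should act on $\mathbf{1}$ (the statement as printed is a matrix, not an $N$-vector), and the matrix-exponential form requires the coefficients $\mu,r,\Sigma$ to be time-independent, a hypothesis you correctly flag but which the proposition leaves implicit.
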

\noindent The original value function can be recovered using (\ref{eq:value_f_v}) and is given by:
\begin{equation}
\vectFunc{v}(t) = -\frac{1}{\theta} \log \vectFunc{u}(t) = -\frac{1}{\theta} (Q-\theta \diag (g^*))(T-t)
\end{equation}

\subsection{Case with independent jumps}
Let us now look at the well studied case, where the jumps in the assets do not coincide with the regime switches.
In every state the asset processes follow a jump diffusion.
The martingales $M_t$ defined in (\ref{eq:Mt}) become:
\begin{equation}
M_t = \sum_{T_{i}<t} Z_i - \int_0^t \lambda(X_{s-}) \int_{\mathbb{R}^m} z \gamma(z; X_{s-}) \rd z \ds,
\end{equation}
where $\lambda(X_{s-})$ is the jump intensity and $\gamma(z, X_{s-})$ is the jump distribution in state $X_{s-}$.
As a consequence the the function $g$ defined in (\ref{eq:g}) becomes:
\begin{equation}
\begin{split}
 g(t, i, h) &= \half (\theta +1)h'\Sigma(t,i)\Sigma(t,i)'h - r(t, i) - h'(\mu(t,i)-r(t,i)\mathbf{1})\\
 &+\frac{\lambda(i)}{\theta}\int_{\mathcal{Z}^i}[(1+h'z)^{-\theta} -1]\gamma(z; i) \rd z + \lambda(i) h'\xi(i, t). 
\end{split}
\end{equation}
Note also, that in this case the measure change $\chi^T_h$, defined in (\ref{eq:chi_t}), does not 
change the distribution, and hence the generator, of the factor process $X$. 
Operator $A$, originally defined in (\ref{eq:A}), becomes:
\begin{equation}
\begin{split}
A(\vec{u},h)(t, i) &=Q(i) \vec{u} +\theta g(t, h, i) \vec{u}(i)\\
&= \sum_{j\neq i}Q(i, j) [\vec{u}(j) - \vec{u}(i)]  \\
&+\half \vec{u}(i) \theta (\theta +1)h'\Sigma(t,i)\Sigma(t,i)'h \\
&-\theta \vec{u}(i)h'[\mu + \lambda(i)\xi(i, t) - r\mathbf{1}] \\
& -\theta \vec{u}(i)r(t, i) +\vec{u}(i)\lambda(i)\int_{\mathcal{Z}^i}[(1+h'z)^{-\theta} -1]\gamma(z; i) \rd z,
\end{split}
\end{equation}
To compare the risk profile of this classical case with the new model proposed in the main part of the paper 
we match the jump intensity and distribution in each regime in these two models.Hence we
set $\lambda(i) = \sum_{j\neq i}$ and $\lambda(i) \gamma(z; i) \equiv \sum_{j\neq i} Q(i, j) f(z; i, j)$.
The operator $A$ becomes:
\begin{equation}
\begin{split}
A(\vec{u},h)(t, i) &= \int_{\mathcal{Z}^i}[(1+h'z)^{-\theta}]\vec{u}(i)\sum_{j\neq i} Q(i, j) f(z; i, j) \rd z  \\
&+\half \vec{u}(i) \theta (\theta +1)h'\Sigma(t,i)\Sigma(t,i)'h \\
&-\theta \vec{u}(i)h'[\mu + \sum_{j\neq i} Q(i, j) \xi(i, t) - r\mathbf{1}] \\
& -\theta \vec{u}(i)r(t, i) + \sum_{j\neq i}Q(i, j) [\vec{u}(j) - 2\vec{u}(i)].
\end{split}
\end{equation}
First note that the choice of the optimal strategy is this case does not depend on the expected value of the criterion in other states
and hence the investor in this model is in this sense myopic.
Unlike the case with jumps coinciding with regime switches, this problem simplifies to finding optimal strategy in each regime independently.
Only after the regime switch the system moves to a different strategy. As we mentioned in the introduction, such a formulation of the model
is not realistic and doesn't help the investor hedge against regime changes.

Next, note that the equation above is very similar to (\ref{eq:A}), the main difference being the jump risk convex term. In the new model:
\begin{equation}
 \int_{\mathcal{Z}^i}[(1+h'z)^{-\theta}]\sum_{j\neq i} \vec{u}(j) Q(i, j) f(z; i, j) \rd z
\end{equation}
versus
\begin{equation}
 \int_{\mathcal{Z}^i}[(1+h'z)^{-\theta}]\sum_{j\neq i} \vec{u}(i) Q(i, j) f(z; i, j) \rd z
\end{equation}
in the case with independent jumps. Everything else being equal, the investor will be more risk averse in the model with coinciding jumps
if:
\begin{equation}
 \vec{u}(i) >  \vec{u}(j)\quad \textrm{for every }j\neq i
\end{equation}
that is if the value of $\vec{u}$ after the regime switch is higher than in the current state. Note that the unit of $u$ is opposite 
to utility, i.e. the higher $u$ the worse outcome. This will usually be the case in real applications, where regimes are used to 
model a potential market crash.

%
%

\section{Fixed-point and Mutual Fund characterisation of optimal strategies}\label{sect:mft}

\subsection{Kelly criterion}
The Kelly, or log-optimality, criterion corresponds to the maximisation of the logarithmic utility of wealth $\ex{\log V_T}$, and we have at \eqref{eq:logVt} an explicit expression for $\log V_T$, so the log-optimal strategy, a Markov strategy $h^K(t)=h^K(t, X_{t-})$, can be determined by pointwise maximization.

\begin{proposition}\label{prop:Kelly}
For each $(t,i)\in[0,T]\times\mathcal{N}$ the allocation $h=h^K(t,i)\in\mathbb{R}^m$ to the Kelly (log optimal) portfolio solves the fixed point problem, with $\Sigma=\Sigma(t,i)$,
\be\label{hK}
        h=(\Sigma\Sigma')^{-1}\left\{(\mu(t,i)-r(t,i)\mathbf{1})+\sum_{j\neq i} Q(i, j) \left[ \int_{\mathcal{Z}^i} \frac{z}{1+h'z}f(z; i, j) \rd z - \xi(i, j)\right]\right\}. \ee
\end{proposition}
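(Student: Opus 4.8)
The plan is to use the explicit formula \eqref{eq:logVt2} for $\log V_T$ and to reduce the maximization of $\ex{\log V_T}$ to a pointwise, strictly concave optimization in each regime. First I would take expectations in \eqref{eq:logVt2}. For a bounded (in particular, any Markov) strategy the stochastic integral $\int_0^T h_s'\Sigma\dWs$ is a genuine martingale, since $h$ is bounded (each $\mathcal{J}(i)$ is bounded) and $\Sigma(\cdot,i)$ is continuous on $[0,T]$, so it contributes nothing. The jump sum $\sum_{0\leq s\leq T}\log(1+h_s'Z_s)$ is dealt with by the compensation (L\'evy-system) formula for the chain $X$ with its marked jumps $Z$, giving
\[
\ex{\sum_{0\leq s\leq T}\log(1+h_s'Z_s)}=\ex{\int_0^T\sum_{j\neq X_{s-}}Q(X_{s-},j)\int_{\mathcal{Z}^{X_{s-}}}\log(1+h_s'z)f(z;X_{s-},j)\rd z\ds}.
\]
These steps are legitimate because, with $h$ bounded and the $Z_i$ confined to the bounded set $\mathcal Z^i$ which is bounded away from $-1$, the summand $\log(1+h_s'Z_s)$ is bounded while the number of jumps on $[0,T]$ has all moments, so $\ex{\log V_T}$ is well defined and finite.

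Collecting terms gives $\ex{\log V_T}=\ex{\int_0^T\Phi(s,X_{s-},h_s)\ds}$, where for $i\in\mathcal N$, $h\in\mathcal J(i)$,
\[
\Phi(t,i,h)=r(t,i)+h'(\mu(t,i)-r(t,i)\mathbf{1})-\half h'\Sigma(t,i)\Sigma(t,i)'h+\sum_{j\neq i}Q(i,j)\left[\int_{\mathcal Z^i}\log(1+h'z)f(z;i,j)\rd z-h'\xi(i,j)\right].
\]
Because the regime at $s-$ is $X_{s-}$ and the constraint sets $\mathcal J(i)$ do not interact across $i$, the integral is maximized by maximizing $\Phi(t,i,\cdot)$ over $\mathcal J(i)$ separately for each $(t,i)$; the pointwise maximizer $(t,i)\mapsto h^K(t,i)$ is measurable (in fact continuous), hence defines a Markov strategy, and no $\Ft$-predictable strategy can do better. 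The map $h\mapsto\Phi(t,i,h)$ is strictly concave: the quadratic term is strictly concave since $\Sigma\Sigma'>\epsilon I$, the map $h\mapsto\int_{\mathcal Z^i}\log(1+h'z)f(z;i,j)\rd z$ is concave, and the remaining terms are affine, so the maximizer is unique.

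It remains to locate the maximizer and write down its first-order condition. As $h$ tends to $\d\mathcal J(i)$ there is $\psi\in\mathcal Z^i$ with $h'\psi\downarrow-1$, and by condition (iii) in the definition of $\mathfrak F$ the distribution $f(\cdot;i,j)$ charges every neighbourhood of such $\psi$ with positive mass, whence $\int_{\mathcal Z^i}\log(1+h'z)f(z;i,j)\rd z\to-\infty$; thus $\Phi(t,i,h)\to-\infty$ and the maximum is attained at an interior point. Setting $\nabla_h\Phi(t,i,h)=0$ (differentiation under the integral sign being justified by the boundedness of $z$ on $\mathcal Z^i$ and of $(1+h'z)^{-1}$ near the interior optimum) yields
\[
(\mu(t,i)-r(t,i)\mathbf{1})-\Sigma\Sigma'h+\sum_{j\neq i}Q(i,j)\left[\int_{\mathcal Z^i}\frac{z}{1+h'z}f(z;i,j)\rd z-\xi(i,j)\right]=0,
\]
and solving for $h$ using the invertibility of $\Sigma\Sigma'$ gives exactly \eqref{hK}.

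I expect the main obstacle to be the rigorous justification of the reduction to pointwise optimization: the compensation formula for the chain's marked jumps, the measurable-selection argument showing that the pointwise optimizer is itself an admissible Markov strategy, and the verification that no general predictable strategy beats it. The interior-maximum argument and the differentiation and algebra producing \eqref{hK} are routine.
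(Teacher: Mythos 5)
Your proposal is correct and follows essentially the same route as the paper's proof: both express $\ex{\log V_T}$ as $\ex{\int_0^T \ell(s,X_{s-},h_s)\,\ds}$ by compensating the jump sum, reduce to pointwise maximization of the concave integrand $\ell$ over $\mathcal{J}(i)$ for each $(t,i)$, and obtain \eqref{hK} from the first-order condition. Your version merely supplies more detail (the boundary/interior-maximum argument and the justification of the compensation step) than the paper's terser treatment.
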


\begin{proof}
The process $M_t$ is a martingale and hence for $h(t) \in \mathcal{H}$,
\begin{eqnarray}
\ex{\log V_t} 
&=& \mathbb{E}\Bigg[ \int_0^t[r_s+h'(\mu_s-r_s\mathbf{1})] \ds - \half\int_0^t h'_s\Sigma\Sigma'h_s \ds
                                                                                                        \nonumber\\
&& - \sum_{0\leq s\leq t} \left\{ \log (1+h_s'Z_s) - h_s'Z_s \right\} \Bigg]
                                                                                                        \nonumber\\
&=& \mathbb{E}\Bigg[ \int_0^t  \ell(s,X(s^-),h(s)) \ds \Bigg],
\end{eqnarray}
where
\begin{eqnarray}
        \ell(s,i,h)
        &:=& r(s,i)+h'(\mu(s,i)-r(s,i)\mathbf{1}) - \half h'\Sigma(s,i)\Sigma'(s,i)h 
                                                                                                        \nonumber\\
        &&      +\sum_{j\neq i} Q(i, j) \left[ \int_{\mathcal{Z}^i} \log(1+h'z)f(z; i, j) \rd z - h'\xi(i, j)\right] .   
                                                                                                        \nonumber
\end{eqnarray}
The optimal strategy can be calculated pointwise: with $\mathcal{J}(i)$ given by \eqref{eq:admissibility},
\begin{equation}
h^K(t,i) = \argmax_{h \in\mathcal{J}(i)} \ell(t, i, h).
\end{equation}
The functional $\ell$ is globally concave in $h$. As a result, it admits a unique maximiser. Applying the first order condition, we find that $\partial \ell(s,x,\hat{h})/\partial h=0$ if and only if $h$ satisfies the fixed-point relation \eqref{hK}.\end{proof}

Note that in the special case with no jumps in asset prices, i.e. when $f(\cdot; i, j) \equiv 0$ for all $i, j \in \mathcal{N}$, \eqref{hK} gives us the allocation of the Kelly portfolio explicitly as
\[  h^{KD}(t,i)= (\Sigma\Sigma')^{-1}\left(\mu(t,i)-r(t,i)\mathbf{1}\right).\]
This shows that when the investor directly observes the factor process---as we assume here---it is optimal to apply the Merton log-optimal strategy appropriate for parameters $\mu(t,i)$ etc. as long as $X_t=i$. There is no need to hedge against future changes in the factor process since, when such changes occur, there is no jump in asset prices  and the portfolio can be instantly rebalanced. When there are jumps in asset prices at the same time as a factor change, none of these statements apply and the investor must take precautionary measures.

\subsection{General case} Returning to the general case, the optimal strategy $h^*$ for the general case developed in Section~\ref{sec_general} is characterized by a fixed-point relation generalising \eqref{hK} of Proposition \ref{hK} above.
In the proof of Theorem~\ref{theorem:control} we saw that
\begin{equation}
 h^*(u, t, i) = \argmin_{h\in\mathcal{H}} \left\{ A(u,h)(t, i)\right\},
\end{equation}
where $A(u,h)(t, i)$ is defined at \eqref{eq:matrixA}, and we obtain the result below from the first-order conditions for this minimisation problem. The details are straightforward and are left to the reader. Similar calculations can be found in \cite{davis2013}, \cite{mdsl2014}.
\begin{proposition}\label{prop:fixedpoint}
For each $i\in\mathcal{N}$ the optimal asset allocation $h^*(t,i)$ solves the fixed point problem
\bea
 h &=& \frac{1}{1+\theta} (\Sigma\Sigma')^{-1} \Bigg[\mu(t,i) - r(t,i)\mathbf{1}\nonumber\\
&&  +\sum_{j\neq i}Q(i, j)\Bigg(\frac{\mathfrak{u}_j(t)}{\mathfrak{u}_i(t)} \int_{\mathcal{Z}^i}\frac{z}{(1+h'z)^{1+\theta}}f(z;i, j)\rd z-\xi(i, j)\Bigg)\Bigg],\label{**}
\eea
where $\mathfrak{u}(t)$ is the solution of the HJB equation \eqref{eq:ode}, \eqref{eq:finalCondition}. 
\end{proposition}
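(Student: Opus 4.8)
The plan is to invoke the characterisation obtained in the proof of Theorem~\ref{theorem:control}: for fixed $(t,i)$ the map $h\mapsto A(\mathfrak{u}(t),h)(t,i)$ is strictly convex on $\mathcal{J}(i)$ and, by Proposition~\ref{prop:unique_h}, attains its minimum at a unique point $h^*(t,i)$ which is interior (except possibly in the non-generic case where the limiting integral of Proposition~\ref{prop:unique_h} stays finite). In the interior case $h^*(t,i)$ is the unique zero of $\nabla_h A(\mathfrak{u}(t),h)(t,i)$, so the proposition is just the first-order condition written out and rearranged.

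I would compute the gradient from $A(\mathfrak{u}(t),h)(t,i)=Q^{h}(i)\,\mathfrak{u}(t)+\theta\,g(t,i,h)\,\mathfrak{u}_i(t)$ with $Q^h$ and $g$ as in \eqref{eq:qh} and \eqref{eq:g}, which keeps the sign bookkeeping of the $\xi$ and drift terms transparent. Three groups of terms depend on $h$: the jump integrals $\int_{\mathcal{Z}^i}(1+h'z)^{-\theta}f(z;i,j)\rd z$, appearing both in $Q^h(i,\cdot)$ and inside $g$; the quadratic form $\half(\theta+1)h'\Sigma\Sigma'h$ in $g$; and the linear term $-h'(\mu-r\mathbf{1})+\sum_{j\neq i}Q(i,j)h'\xi(i,j)$ in $g$. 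Differentiating termwise, the $\mathfrak{u}_i$-weighted pieces of the jump integral coming from $Q^h(i,i)=-\sum_{j\neq i}Q^h(i,j)$ and from the constant $-1$ in $\int_{\mathcal{Z}^i}[(1+h'z)^{-\theta}-1]f(z;i,j)\rd z$ cancel, leaving the net term $-\theta\sum_{j\neq i}Q(i,j)\mathfrak{u}_j(t)\int_{\mathcal{Z}^i}z(1+h'z)^{-\theta-1}f(z;i,j)\rd z$; the quadratic term gives $\theta(\theta+1)\mathfrak{u}_i(t)\Sigma\Sigma'h$; the linear term gives $-\theta\mathfrak{u}_i(t)\big(\mu-r\mathbf{1}-\sum_{j\neq i}Q(i,j)\xi(i,j)\big)$. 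Setting the sum to zero, dividing by $\theta\mathfrak{u}_i(t)$—legitimate since $\mathfrak{u}_i(t)\geq u_{min}>0$ by Proposition~\ref{prop:domain}—and inverting $\Sigma\Sigma'$, which exists because $\Sigma\Sigma'>\epsilon I$, yields precisely \eqref{**}. It remains a genuine fixed-point equation rather than a closed form because $h$ still occurs in the integrand, but the existence of a solution is already guaranteed by Proposition~\ref{prop:unique_h}.

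The one technical point worth a sentence is the differentiation under the integral sign. On $\mathcal{J}(i)\times\overline{\mathcal{Z}^i}$ the quantity $1+h'z$ is bounded and bounded away from $0$, so $(1+h'z)^{-\theta}$ and $z(1+h'z)^{-\theta-1}$ are dominated by a constant multiple of $1+|z|$, which is $f(\cdot;i,j)$-integrable by condition (iii) in the definition of $\mathfrak{F}$; hence the integrals are $C^1$ in $h$ with derivatives obtained termwise.

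The main obstacle, such as it is, is the boundary case flagged in Proposition~\ref{prop:unique_h}: if $h^*(t,i)$ lies on $\partial\mathcal{J}(i)$, the first-order equation \eqref{**} need not hold with equality. One handles this either by restricting to the generic interior regime, or by imposing a mild non-degeneracy assumption on the jump densities forcing the minimiser inside $\mathcal{J}(i)$. The situation is entirely parallel to the Kelly computation of Proposition~\ref{prop:Kelly} and to the first-order analyses in \cite{davis2013} and \cite{mdsl2014}, and nothing else in the argument is delicate.
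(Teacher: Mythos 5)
Your proposal is correct and follows exactly the route the paper intends: the paper's own ``proof'' of Proposition~\ref{prop:fixedpoint} is the single remark that the result is the first-order condition for minimising $A(\mathfrak{u}(t),h)(t,i)$ with the details left to the reader, and you have supplied those details accurately (note that differentiating from \eqref{eq:g} and \eqref{eq:qh}, as you do, produces the sign $-\xi(i,j)$ appearing in \eqref{**}, whereas the displayed formula \eqref{eq:A} carries a sign slip in its $\sum_{j\neq i}Q(i,j)\xi(i,j)$ term). Two small remarks: your caveat about a possible boundary minimiser is a genuine point the paper glosses over, and for the differentiation under the integral sign you only need $1+h'z$ bounded away from zero on a neighbourhood of the interior minimiser, not on all of $\mathcal{J}(i)\times\overline{\mathcal{Z}^i}$, where that uniform bound can fail.
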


For $\theta>0$\, we can define an allocation $\tilde{h}^\theta:=((1+\theta)h^*-h^K)/\theta$, giving a representation for $h^*$ in the form
\[ h^*=\frac{1}{1+\theta}h^K+\frac{\theta}{1+\theta}\tilde{h}^\theta.\]
This is analogous to Merton's Mutual Fund Theorem \cite{mdsl2014} in that the investor keeps a constant fraction $1/(1+\theta)$ of his/her wealth in a Kelly fund and the complementary fraction in a hedge portfolio $\tilde{h}^\theta$. This is not a mutual fund theorem in Merton's sense: the hedge portfolio is not universal but depends on the investor's risk appetite (as the notation suggests). It does however demonstrate how increasing risk aversion causes the investor to move away from a pure optimal growth strategy in the direction of a strategy more concerned with hedging jump risk. Indeed, if there are no jumps in the asset prices then from \eqref{**} we have $h^*=h^K/(1+\theta)$ and $\tilde{h}^\theta=0$, so the `hedge portfolio' is cash and the investor is following Merton's strategy for power utility, with parameters appropriate to the current value of the factor process.

\section{Concluding remarks on application of the model}\label{sect:conc}
In the abstract of his 1952 paper \cite{mar52}, Harry Markowitz states
\begin{quote}
The process of selecting a portfolio may be divided into two stages. 
The first stage starts with observation and experience and ends with 
beliefs about the future performances of available securities. The 
second stage starts with the relevant beliefs about future performances 
and ends with the choice of portfolio.
\end{quote}
Obviously, this paper is a contribution to the second stage, but we might ask what `observation and experience' could lead to a model of the sort we propose. Our model is a highly stylized description of reality: no-one could believe that in reality there are a finite number of factors `out there' that control the market. For this reason our view is that models similar to ours, but with an unobserved factor process estimated by a Wonham-type nonlinear filter \citep{bc2009}, are misguided: the filter output could be meaningless when the filter input is not the process for which the filter is designed. Instead, our model is intended to scope out the future in a way that is good enough to enable the portfolio manager to take advantage of opportunities while also enabling him/her to hedge against excessive risk. There will be periods of high and low volatility, bull and bear markets and occasional market crashes. A good model should predict all of these with some realistic evaluation of their probabilities and timing. Our model seems general enough to accomplish this task while at the same time incurring a modest computational overhead, at least in comparison with standard jump-diffusion models where the HJB equation is a PIDE.

The same objectives are pursued from a somewhat different perspective in the literature on Stochastic Programming (see  \cite{wtz03} or  \cite{bl2011}). The models here are discrete state $j$ and discrete time $k$ with only a limited number of time stages, perhaps even 4 or 5. Often a model is constructed on the basis of a \emph{scenario tree}, a standard tree structure in which the stages represent real times $0,t_1,t_2,\ldots$ where the gaps $t_{k+1}-t_{k}$ increase with $k$ so that the model is more detailed for the near future and sketchy for the far future. At each stage $k+1$ market parameters are specified in each state $j$, to apply for the period $(t_{k+1}-t_{k}]$ and at time $k$ a random choice is made to determine which set of market parameters will apply. Each path through the tree is a \emph{scenario} and methods of mathematical programming are used to optimise some performance functional over the set of decision variables. Construction of scenario trees is a big subject in its own right, see for example  \cite{kw2007},  \cite{kb2001} or \cite{pf2001}.

Our approach takes a more probabilistic viewpoint in which the stages $k$ correspond to the random switching times $\{t: X_t\not= X_{t-}\}$ and within each stage the model undergoes a random evolution modelled in continuous time. It seems to us that, while stochastic programming is certainly the right approach for long-term asset and liability management, the probabilistic approach could be a superior alternative for shorter time horizons of a few years when the objective is simply wealth maximisation with no liability-induced constraints.

%
%

\appendix
\section{Measure change}\label{sect:measureChange}
In this appendix we summarise the measure change theory used in the paper. Section \ref{sect:measureChangeMC} is a special case of
the theory presented in \cite{davis2011b} and deals with finite state Markov chains. Section \ref{sect:measureChangeCP}
discusses the measure change induced by regime-switching compound Poisson process, in particular the impact on the underlying
Markov chain. Finally the last section discusses the risk-adjusted changes of measure.

\subsection{Measure change for Markov chains}\label{sect:measureChangeMC}
Let $X_t$ be a Markov chain with generator matrix $Q$, as described in Section \ref{sect:market}.
For any function $\xi:\mathcal{N}\times\mathcal{N}\times\mathbb{R}\rightarrow \mathbb{R}$ define a martingale $M^\xi$ as follows:
\begin{equation}\label{eq:Mf}
 M^\xi_t =  \sum_{T_{i} \leq t} \xi(X_{T_i-}, X_{T_i}, T_i) - \int_0^t \sum_{j\neq X_{s-}} \xi(X_{s-}, j, s)Q(X_{s-},j) \ds,
\end{equation} 
Next, define the measure change martingale as the stochastic exponential of $M^\xi$:
\begin{equation}\label{eq:dQdP}
\begin{split}
 \left.\frac{\rd \mathbb{Q}}{\rd \mathbb{P}}\right|_{\FT} &= \mathcal{E}(M^\xi)\\
&=e^{M_T^\xi - \half [M^\xi, M^\xi]^c_T} \prod_{0<s\leq T} (1+\Delta M_s^\xi)e^{-\Delta M^\xi_s}\\
&=\prod_{0<s\leq T} (1+\xi(X_{s-}, X_s, s))\exp \left(- \int_0^t \sum_{j\neq X_{s-}} \xi(X_{s-}, j, s)Q(X_{s-},j) \ds \right),
\end{split}
\end{equation}
where $\ex{\frac{\rd \mathbb{Q}}{\rd \mathbb{P}}}=1$, using arguments analogous to the proof of Proposition \ref{prop:martingale}. Note that by the properties of the generator of Markov chains
\begin{equation}
 \lambda(i) = -Q(i, i)
\end{equation}
is the intensity of jumps of the process $\{\Lambda\}$. Once the factor process jumps, the probability of jump from state $i$ to
another state $j$ is given by:
\begin{equation}
 P_{ij} = \begin{cases}
               \frac{Q(i, j)}{\lambda(i)}                & i \neq j\\
               0              & \text{otherwise}
           \end{cases}
\end{equation}
\begin{proposition}\label{prop:measureChangeMC}
In the new probability measure defined by (\ref{eq:dQdP}) the generator of $X$ is given by:
\begin{equation}
 \widetilde{q}_{ij} = \begin{cases}
               Q(i, j)(\xi(i, j, t)+1)                & i \neq j\\
               -\sum_{k\neq i}    \widetilde{q}_{ik}           & \text{otherwise}
           \end{cases}
\end{equation}
Let $\gamma$ and $\beta$ be such that:
\begin{equation}\label{eq:beta}
\beta(i) = \sum_j P_{ij} [\xi(i, j, t)+1]
\end{equation}
and
\begin{equation}\label{eq:gamma}
\gamma_{ij} = \frac{\xi(i, j, t)+1}{\beta(i)}
\end{equation}
Then, in particular, the jump intensity becomes $\widetilde{\lambda} = \beta \lambda$ and the probability of jumps become:
\begin{equation}
 \widetilde{P}_{ij} = \gamma_{ij}P_{ij},
\end{equation}
\end{proposition}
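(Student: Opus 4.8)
The plan is to identify the $\mathbb{Q}$-generator through the standard martingale characterisation of Markov chains together with Bayes' rule. Writing $L_t=\mathcal{E}(M^\xi)_t$, so that $\rd L_t=L_{t-}\,\rd M^\xi_t$ and $L$ is a true $\mathbb{P}$-martingale (this is exactly the point remarked after \eqref{eq:dQdP}, proved as in Proposition~\ref{prop:martingale}), I would use that a matrix $\widetilde{Q}$ is the generator of $X$ under $\mathbb{Q}$ iff for every bounded $f:\mathcal{N}\to\mathbb{R}$ the process $N^f_t:=f(X_t)-\int_0^t(\widetilde{Q}f)(X_s)\,\ds$ is a $\mathbb{Q}$-martingale, which by Bayes' rule is equivalent to $N^f_tL_t$ being a $\mathbb{P}$-martingale. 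Thus the whole task reduces to finding the unique $\widetilde{Q}$ for which $N^fL$ has zero $\mathbb{P}$-drift for every such $f$.

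Next I would write out the $\mathbb{P}$-semimartingale decompositions of the two factors. Since the rows of $Q$ sum to zero, $(Qf)(i)=\sum_{j\neq i}Q(i,j)(f(j)-f(i))$, and $m^f_t:=\sum_{T_k\le t}\big(f(X_{T_k})-f(X_{T_k-})\big)-\int_0^t(Qf)(X_s)\,\ds$ is a $\mathbb{P}$-martingale, whence $\rd N^f_t=\rd m^f_t+\big[(Qf)(X_{t})-(\widetilde{Q}f)(X_{t})\big]\dt$. Applying the product rule, $\rd(N^f_tL_t)=N^f_{t-}\,\rd L_t+L_{t-}\,\rd N^f_t+\rd[N^f,L]_t$, where $N^f_{t-}L_{t-}\,\rd M^\xi_t$ and $L_{t-}\,\rd m^f_t$ are $\mathbb{P}$-local martingales, and $[N^f,L]$ is pure-jump with $\Delta[N^f,L]_{T_k}=\big(f(X_{T_k})-f(X_{T_k-})\big)L_{T_k-}\xi(X_{T_k-},X_{T_k},T_k)$, whose $\mathbb{P}$-compensator is $\int_0^tL_{s-}\sum_{j\neq X_{s-}}Q(X_{s-},j)\big(f(j)-f(X_{s-})\big)\xi(X_{s-},j,s)\,\ds$. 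Collecting the predictable parts, the $\mathbb{P}$-drift of $N^fL$ is $L_{t-}\big[(Qf)(X_{t-})-(\widetilde{Q}f)(X_{t-})+\sum_{j\neq X_{t-}}Q(X_{t-},j)(f(j)-f(X_{t-}))\xi(X_{t-},j,t)\big]\dt$, which vanishes for all bounded $f$ precisely when $(\widetilde{Q}f)(i)=\sum_{j\neq i}Q(i,j)(f(j)-f(i))(1+\xi(i,j,t))$, i.e.\ $\widetilde{q}_{ij}=Q(i,j)(\xi(i,j,t)+1)$ for $i\neq j$ and $\widetilde{q}_{ii}=-\sum_{k\neq i}\widetilde{q}_{ik}$, as claimed.

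The statements about $\widetilde\lambda$ and $\widetilde P$ are then pure algebra. The $\mathbb{Q}$-jump intensity out of $i$ is $\widetilde{\lambda}(i)=-\widetilde{q}_{ii}=\sum_{j\neq i}Q(i,j)(1+\xi(i,j,t))=\lambda(i)\sum_{j}P_{ij}(1+\xi(i,j,t))=\lambda(i)\beta(i)$, using $\sum_{j\neq i}Q(i,j)=\lambda(i)$, $P_{ii}=0$, $\sum_jP_{ij}=1$ and \eqref{eq:beta}; and the $\mathbb{Q}$-jump probability is $\widetilde{P}_{ij}=\widetilde{q}_{ij}/\widetilde{\lambda}(i)=P_{ij}\,(1+\xi(i,j,t))/\beta(i)=\gamma_{ij}P_{ij}$ by \eqref{eq:gamma}.

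The step I expect to be the main obstacle is the covariation bookkeeping: one must confirm that $[N^f,L]$ has no continuous component (both $X$ and $M^\xi$ are finite-variation pure-jump apart from an absolutely continuous drift) and that its compensator is exactly as stated, which rests on the fact that the $\mathbb{P}$-jump kernel of $X$ is $Q(i,j)$; once that is in hand the form of $\widetilde Q$ is forced. A secondary, routine point is upgrading the ``local martingale'' claims to genuine martingales on $[0,T]$, which follows by a standard localisation given that $L$ is already a true $\mathbb{P}$-martingale with $\sup_{t\le T}L_t\in L^1$ and $f$ bounded, so no integrability difficulty arises beyond what the measure change itself already requires.
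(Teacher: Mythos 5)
Your argument is correct, but it is a genuinely different route from the one the paper takes. The paper's proof is essentially a two-line reduction: it observes that \eqref{eq:beta} and \eqref{eq:gamma} give $\sum_j \gamma_{ij}P_{ij}=1$ and $\xi(i,j,t)=\gamma_{ij}\beta(i)-1$, and then invokes the general change-of-measure theory for marked point processes in \cite{davis2011b}, of which the stated formulas for $\widetilde{q}$, $\widetilde{\lambda}=\beta\lambda$ and $\widetilde{P}_{ij}=\gamma_{ij}P_{ij}$ are a special case (the $(\beta,\gamma)$ parametrisation is precisely the intensity/jump-distribution factorisation used there). You instead derive $\widetilde{Q}$ from first principles via the martingale characterisation of the generator and Bayes' rule, computing the $\mathbb{P}$-drift of $N^f L$ through the product rule and the compensator of the pure-jump covariation $[N^f,L]$; your bookkeeping is right (both factors are finite-variation apart from absolutely continuous drifts, so $[N^f,L]^c=0$, and the jump kernel $Q(i,j)$ gives exactly the compensator you state), and the algebra identifying $\widetilde{\lambda}$ and $\widetilde{P}$ is the same in both treatments. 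What your approach buys is self-containedness and transparency about where the formula $\widetilde{q}_{ij}=Q(i,j)(1+\xi(i,j,t))$ comes from; what the paper's approach buys is brevity and the immediate availability of the $(\beta,\gamma)$ decomposition into an intensity change and a jump-destination change. One point worth making explicit in your version, which the citation-based proof inherits from the general theory, is that $1+\xi(i,j,t)\geq 0$ is needed for $\widetilde{Q}$ to be a legitimate (time-dependent) generator and for $L$ to remain nonnegative; otherwise your derivation is complete.
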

\begin{proof}
Note that (\ref{eq:beta}) and (\ref{eq:gamma}) imply that:
\begin{equation}
 \sum_j  \gamma_{ij}P_{ij} = 1.
\end{equation}
and
\begin{equation}
\xi(i, j, t) = \gamma_{ij}\beta(i)-1
\end{equation}
Hence the result is a special case of the change of measure theory presented in \cite{davis2011b}. 
\end{proof}
In our particular case the equation simplifies to:
\begin{equation}
 \left.\frac{\rd \mathbb{Q}}{\rd \mathbb{P}}\right|_{\FT} = \prod_{0<s\leq T} (1+\Delta M_s^\xi)e^{- \int_0^T \sum_j \xi(j, X_{s-}, s)P(X_{s-},j) \lambda(X_{s-}) \ds}
\end{equation}
Following \cite{cont2012} we can express the measure change in the exponential form:
\begin{equation}
\begin{split}
 \left.\frac{\rd \mathbb{Q}}{\rd \mathbb{P}}\right|_{\FT} &= \exp \left( \sum_{T_i<T} \log (1 + \Delta M^\xi_{T_i}) - \int_0^T \sum_j \xi(X_{s-}, j, s)P(X_{s-},j) \lambda(X_{s-}) \ds\right)\\
&=\exp \left( \sum_{T_i<T}\log (\gamma(X_{T_i-}, X_{T_i})\beta(X_{T_i-})) \right)\\
&\quad \times \exp \left(- \int_0^T \sum_j \xi(X_{s-}, j, s)P(X_{s-},j) \lambda(X_{s-}) \ds\right)
\end{split}
\end{equation}
The last integral might be simplified:
\begin{equation}
\begin{split}
 \sum_j \xi(X_{s-}, j, s)P(X_{s-},j) \lambda(X_{s-}) &= \lambda(X_{s-}) \sum_j (\gamma_{ij}\beta(i)-1) P(X_{s-},j)\\
&=\lambda(X_{s-}) \beta(i) \sum_j \gamma_{ij} P(X_{s-},j) \\
&\quad\quad - \lambda(X_{s-}) \sum_j P(X_{s-},j)\\
&=\widetilde{\lambda}(X_{s-}) - \lambda(X_{s-})
\end{split}
\end{equation}
After substitution and application of some simple algebra, we get the final form:
\begin{equation}
\begin{split}
 \left.\frac{\rd \mathbb{Q}}{\rd \mathbb{P}}\right|_{\FT}&= \exp \left(\sum_{T_i<T} \log \gamma(X_{T_i-}, X_{T_i})\right)\\
 &\quad\times  \exp \left(\sum_{T_i<T} \log \beta(X_{T_i-}) - \int_0^T \widetilde{\lambda}(X_{s-}) - \lambda(X_{s-})\ds\right).
 \end{split}
\end{equation}
Note, that the first term is responsible for the change of measure of the distribution of jump destination, and the
second term is just a Poisson process intensity change corresponding to the jump times.

\subsection{Measure change for regime-switching compound Poisson processes}\label{sect:measureChangeCP}
Now let the measure change be defined by:
\begin{equation}
\begin{split}
 \left.\frac{\rd \mathbb{Q}}{\rd \mathbb{P}}\right|_{\FT} &= \mathcal{E}(M_T)\\
&= e^{M_T - \half [M, M]^c_T} \prod_{0<s\leq T} (1+\Delta M_s)e^{-\Delta M_s},\\
\end{split}
\end{equation}
where $M_t$ is defined in (\ref{eq:Mt}) and $\mathbb{E}[\rd\mathbb{Q}/\rd\mathbb{P}]=1$, using arguments analogous to the proof of Proposition \ref{prop:martingale}.
If we denote by $\FT^X = \sigma(\{X_t\}_{0\leq t \leq T})$ the filtration generated by the factor process up to time $T$,
then the measure change relevant for process $\{X\}$ is given by:
\begin{equation}
\begin{split}
\left.\frac{\rd \mathbb{Q}}{\rd \mathbb{P}}\right|_{\FT^X} &= \ec{\mathcal{E}(M_T)}{\FT^X}\\
&=\exp(- \int_0^T  \sum_{j\neq X_{s-}}\xi(X_{s-}, j) Q(X_{s-}, j) \ds ) \ec{\prod_{0<s\leq T} (1+Z_s)}{\FT^X}\\
&=\exp(- \int_0^T  \sum_{j\neq X_{s-}}\xi(X_{s-}, j) Q(X_{s-}, j) \ds ) \prod_{0<s\leq T} \ec{(1+Z_s) }{\FT^X}\\
&=\prod_{0<s\leq T} (1+\xi(X_{s-}, j))\exp(- \int_0^T \sum_{j\neq X_{s-}}\xi(X_{s-}, j) Q(X_{s-}, j) \ds ),
\end{split}
\end{equation}
where we used the independence property of $Z_i$-s to interchange the product and expectation in the third line
and the definition of $\xi$ in (\ref{eq:xi}). 
Note that this measure change martingale is of the same form as used in the previous section:
\begin{equation}
 \left.\frac{\rd \mathbb{Q}}{\rd \mathbb{P}}\right|_{\Ft^X} = \ec{\mathcal{E}(M_T)}{\Ft^X} = \mathcal{E}(M^{\xi}_t),
\end{equation}
hence results from Proposition \ref{prop:measureChangeMC} apply.

\subsection{Risk-adjusted stochastic exponentials}
Given the factor process $X$ and the jump sequence $Z_i$ defined in Section \ref{sect:market}, let $M^{h, \theta}_t$ be a martingale given by:
\begin{equation}\label{eq:M_h_theta}
M^{h, \theta}_t = \sum_{T_{i}<t} [(1+h'Z_i)^{-\theta} -1] - \int_0^t \sum_{j\neq X_{s-}} \int_{\mathcal{Z}^i} [(1+h'Z_i)^{-\theta} -1] \phi(z; X_{s}) \rd z \ds,
\end{equation}
where 
\begin{equation}\label{eq:phi_z_i}
 \phi(z; i) = \lambda(i) \tilde{f}(z; i)
\end{equation}
is the compensator of jumps, $\lambda(i) = \sum_{j \neq i} Q(i, j)$ is the jump intensity in state $i$ and
\begin{equation}
\tilde{f}(z; i) = \frac{\sum_{j \neq i} Q(i, j) f(z; i, j)}{\lambda(i)} 
\end{equation}
is the (mixture) density of jump size in state $i$. The stochastic exponential of $M^{h, \theta}_t$ is given by:
\begin{equation}\label{eq:EMh_theta}
\mathcal{E}(M^{h, \theta}_t) =  \prod_{0<T_i\leq T} (1+h'Z_i)^{-\theta}e^{- \int_0^t \sum_{j\neq X_{s-}} \int_{\mathcal{Z}^i} [(1+h'Z_i)^{-\theta} -1] \phi(z; X_{s1}) \rd z \ds}.
\end{equation}
To see the impact of a measure change defined by this stochastic exponential on the factor process $X$, let:
\begin{equation}
\xi^\theta(X_{s-}, j) = \int_\mathbb{R} (1+h'z)^{-\theta} f(z; X_{s-}, j) \rd z -1.
\end{equation}
Using the results from the previous section, the projection to the filtration generated by the factor process is given by:
\begin{equation}
\ec{\mathcal{E}^\theta(M_t)}{F^X_T} = \prod_{0<T_i\leq T} (1+\xi^\theta(X_{T_i-}, X_{T_i}))e^{- \int_0^T \sum_{j\neq X_{s-}} \xi^\theta(X_{s-}, j) Q(X_{s-},j) \ds},
\end{equation}
because of the independence of the jump sizes $Z$ and the identity
\begin{equation}
\ec{(1+h'Z_i)^{-\theta}}{F^X_T} = \int_\mathbb{R} (1+h'z)^{-\theta} f(z; X_{T_i-}, X_{T_i}) \rd z = \xi^\theta(X_{T_i-}, X_{T_i}) +1
\end{equation}
for $T_i\leq T$. Hence the effect of this measure change on the factor process follows from Proposition \ref{prop:measureChangeMC}, where the martingale is defined by the function $\xi^\theta$.

The following proposition summarises the effect of this measure change on the distribution of jumps sizes $Z$; it is needed in the proof of Proposition \ref{prop:martingale}.
\begin{proposition}\label{prop:measure_change_phi}
Provided that $\mathcal{E}(M^{h, \theta}_t)$ is a martingale, let:
\begin{equation}
\frac{\rd \tilde{\mathbb{P}}}{\rd \mathbb{P}} = \mathcal{E}(M^{h, \theta}_t) 
\end{equation}
with $\mathcal{E}(M^{h, \theta}_t) $ defined in (\ref{eq:M_h_theta}). Then the compensator of jumps $Z$ in the new measure $\tilde{\mathbb{P}}$ is given by:
\begin{equation}
\tilde{\phi}(z; i)  = (1+h'z)^{-\theta}\phi(z; i)
\end{equation}
\end{proposition}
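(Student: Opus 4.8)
The plan is to recognise this as a direct instance of Girsanov's theorem for integer-valued random measures, so that the work reduces to recasting $M^{h,\theta}$ in that language and checking the (mild) hypotheses. First I would set up the jump measure: the marked sequence $\{(T_i,Z_i)\}$ constructed in Section~\ref{sect:market} defines an integer-valued random measure $\mu(\rd s,\rd z)=\sum_{i:\,T_i\le T}\delta_{(T_i,Z_i)}(\rd s,\rd z)$ on $[0,T]\times\mathbb{R}^m$, and by the conditional independence of the marks together with \eqref{eq:phi_z_i} its $\mathbb{P}$--compensator is $\nu(\rd s,\rd z)=\phi(z;X_{s-})\,\rd z\,\rd s$. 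Writing $W(s,z):=(1+h'z)^{-\theta}-1$, a predictable integrand, one has $M^{h,\theta}_t=(W\ast(\mu-\nu))_t$, a purely discontinuous local martingale, and $\mathcal{E}(M^{h,\theta})$ is precisely its Dol\'eans-Dade exponential. Because $h_s\in\mathcal{J}(X_{s-})$, the jumps satisfy $1+W(s,z)=(1+h'z)^{-\theta}>0$, so $\mathcal{E}(M^{h,\theta})$ is a strictly positive local martingale, and it is a true martingale by the standing hypothesis of the proposition — this is exactly what lets one dispense with the usual localisation caveat in Girsanov's theorem.

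Then I would invoke the change-of-measure theorem for integer-valued random measures (as in Jacod and Shiryaev, or in the marked-point-process formulation of Br\'emaud): when $\rd\tilde{\mathbb{P}}/\rd\mathbb{P}\big|_{\Ft}=\mathcal{E}\big((Y-1)\ast(\mu-\nu)\big)_t$ for a strictly positive predictable $Y$, the $\tilde{\mathbb{P}}$--compensator of $\mu$ is $\tilde\nu(\rd s,\rd z)=Y(s,z)\,\nu(\rd s,\rd z)$. Applying this with $Y(s,z)=1+W(s,z)=(1+h'z)^{-\theta}$ gives $\tilde\nu(\rd s,\rd z)=(1+h'z)^{-\theta}\phi(z;X_{s-})\,\rd z\,\rd s$, so that the new compensator density is $\tilde\phi(z;i)=(1+h'z)^{-\theta}\phi(z;i)$, which is the assertion.

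If a self-contained derivation is preferred, I would instead check directly that for every nonnegative predictable $H$ the process $t\mapsto(H\ast(\mu-\tilde\nu))_t$ is a $\tilde{\mathbb{P}}$--local martingale: by the abstract Bayes rule this is equivalent to $\mathcal{E}(M^{h,\theta})\cdot(H\ast(\mu-\tilde\nu))$ being a $\mathbb{P}$--local martingale, and Ito's product rule together with $\rd\mathcal{E}(M^{h,\theta})_t=\mathcal{E}(M^{h,\theta})_{t-}\,\rd M^{h,\theta}_t$ reduces this to the identity that the $\mathbb{P}$--compensator of the covariation $[M^{h,\theta},H\ast(\mu-\nu)]=(HW)\ast\mu$ equals $(HW)\ast\nu=H\ast(\tilde\nu-\nu)$, which is just the defining property of the compensator $\nu$. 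Either route makes clear that the only genuine work is the integrability bookkeeping needed to make $W\ast(\mu-\nu)$ and these covariation compensators well defined; that is supplied precisely by the conditions defining $\mathfrak{F}$ (in particular $z^i_{min}>-1$ and $z^i_{max}<\infty$), the admissibility condition \eqref{eq:integrability}, and the assumed martingale property of $\mathcal{E}(M^{h,\theta})$. I therefore expect no obstacle deeper than this routine verification — the statement is essentially a bookkeeping corollary of the random-measure Girsanov theorem rather than something requiring a new idea.
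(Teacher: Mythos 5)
Your proof is correct and, at bottom, rests on the same mechanism as the paper's: identify the Dol\'eans-Dade exponential $\mathcal{E}(M^{h,\theta})$ as the canonical Girsanov density that multiplies the jump compensator by the predictable function $Y(s,z)=(1+h_s'z)^{-\theta}$, and read off $\tilde{\phi}=(1+h'z)^{-\theta}\phi$. The difference lies in the scaffolding. The paper quotes Shreve's likelihood-ratio formula for compound Poisson processes, ``extended with state-dependence'', and matches it term by term against the explicit product form \eqref{eq:EMh_theta}; this is quick, but the extension to a state-dependent intensity $\lambda(X_{s-})$ and mark density $\tilde{f}(\cdot;X_{s-})$ is asserted rather than justified. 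You instead phrase everything in terms of the integer-valued random measure $\mu=\sum_i\delta_{(T_i,Z_i)}$ with compensator $\nu(\rd s,\rd z)=\phi(z;X_{s-})\,\rd z\,\ds$ and invoke the Jacod--Shiryaev/Br\'emaud Girsanov theorem, which handles the state dependence natively; your optional direct verification via the abstract Bayes rule and the compensator identity for $(HW)\ast\mu$ is essentially the proof of that theorem specialised to this setting, so it makes the argument self-contained where the paper's is purely by citation. Both routes need the same hypotheses --- strict positivity of $1+W=(1+h'z)^{-\theta}$ (from $h_s\in\mathcal{J}(X_{s-})$), the integrability supplied by the definition of $\mathfrak{F}$ and \eqref{eq:integrability}, and the assumed martingale property of $\mathcal{E}(M^{h,\theta})$ --- and you account for all of them, so there is no gap.
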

\begin{proof}
The change of measure formula for compound Poisson processes (extended with state-dependence) is given by:
\begin{equation}
\left.\frac{\rd \tilde{\mathbb{P}}}{\rd \mathbb{P}}\right|_{\Ft} = \prod_{s\leq t} \frac{\tilde{\phi}(Z_s; X_{s-})}{\phi(Z_s; X_{s-})} e^{(\lambda(X_{s-})-\tilde{\lambda}(X_{s-}))t},
\end{equation}
see e.g. \citet[pp. 498-499]{shreve2004}. Comparing to (\ref{eq:EMh_theta}) and using calculations similar to proof of Proposition \ref{prop:measureChangeMC}, we get that:
\begin{equation}
 \frac{\tilde{\phi}(z; X_{s-})}{\phi(z; X_{s-})} = (1+h'z)^{-\theta},
\end{equation}
which finishes the proof.

\end{proof}

\section{Proof of Proposition \ref{prop:martingale}}\label{sect:proof:prop:martingale}
This proof in an adaptation of \cite{klebaner2011} to the current setting.
First note that the process $\chi^h_t$, where $h\in \mathcal{H}$ is the fixed trading strategy, solves the following stochastic differential equation:
\begin{equation}
\rd \chi^h_t = \chi^h_{t-} \rd M^\chi_t,
\end{equation}
where
\begin{equation}
\begin{split}
M^\chi_t &= -\theta \int_0^t h_s'\Sigma\dWs + \sum_{0<s\leq T}[(1+h_s'Z_s)^{-\theta}-1] \\
&- \int_0^t \int_{\mathcal{Z}^i}[(1+h'z)^{-\theta} -1]\phi(z; X_{s-}) \rd z \ds
\end{split}
\end{equation}
where $\phi$ is defined as in (\ref{eq:phi_z_i}). First note that, using the assumption (\ref{eq:integrability}), the process $M^\chi$ is a martingale.
Let us define a localizing sequence of stopping times as:
\begin{equation}
 \tau_n = \inf \{t : \chi_t \geq n\}
\end{equation}
Then for every $n$, the stopped process $\chi_{(t\wedge \tau_n)-}$ is bounded by construction. 
The main idea behind the proof is that the uniform integrability of the family $\{\chi_{t\wedge \tau_n}\}_{n\rightarrow \infty}$ is verified by the de la Vall{\'e}e Poussin theorem with function $x\log x$  for $x>0$.
By Ito's lemma:
\begin{equation}
\begin{split}
\chi_{t\wedge \tau_n}^2 -1 &= -2\theta \int_0^t \ind{s\leq \tau_n}\chi^2_{(s\wedge \tau_n)-} h_s'\Sigma\dWs\\
&+2\sum_{0<s\leq T}\ind{s\leq \tau_n}\chi^2_{(s\wedge \tau_n)-}[(1+h_s'Z_s)^{-\theta}-1]\\
&-2\int_0^t\ind{s\leq \tau_n}\chi^2_{(s\wedge \tau_n)-} \int_{\mathcal{Z}^i}[(1+h'z)^{-\theta} -1]\phi(z; X_{s-}) \rd z \ds\\
&+\sum_{0<s\leq T}\ind{s\leq \tau_n}\chi^2_{(s\wedge \tau_n)-}[(1+h_s'Z_s)^{-\theta}-1]^2\\
&-\int_0^t\ind{s\leq \tau_n}\chi^2_{(s\wedge \tau_n)-} \int_{\mathcal{Z}^i}[(1+h'z)^{-\theta} -1]^2\phi(z; X_{s-}) \rd z \ds\\
&+\theta\int_0^t\ind{s\leq \tau_n}\chi^2_{(s\wedge \tau_n)-} h_s'\Sigma\Sigma'h_s \ds\\
&+\int_0^t\ind{s\leq \tau_n}\chi^2_{(s\wedge \tau_n)-} \int_{\mathcal{Z}^i}[(1+h'z)^{-\theta} -1]^2\phi(z; X_{s-}) \rd z \ds\\
\end{split}
\end{equation}
Note that the first five lines in the formula above form a martingale with expectation zero. Hence:
\begin{equation}
\begin{split}
 \ex{\chi_{t\wedge \tau_n}^2 -1} &= \mathbb{E}\left[\int_0^t\ind{s\leq \tau_n}\chi^2_{(s\wedge \tau_n)-} \Big( \theta h_s'\Sigma\Sigma'h_s\right.\\
 &\qquad\qquad\qquad\qquad\left. + \int_{\mathcal{Z}^i}[(1+h'z)^{-\theta} -1]^2\phi(z; X_{s-}) \rd z \Big) \ds\right].
\end{split}
\end{equation}
Using the assumptions from Section \ref{sect:market}, the following process is bounded for every state $i\in\mathcal{N}$ and time $s\leq \tau_n$:
\begin{equation}\label{eq:klebaner:bound}
\theta h_s'\Sigma(i)\Sigma(i)'h_s + \int_{\mathcal{Z}^i}[(1+h_s'z)^{-\theta} -1]^2\phi(z; i) \rd z \leq r,
\end{equation}
and so $\chi_{t\wedge \tau_n}$ is a square integrable martingale with $\ex{\chi_{t\wedge \tau_n}} = 1$. We can use it to define a measure change:
\begin{equation}
 \frac{\rd \mathbb{P}^n}{\rd \mathbb{P}} = \chi_{t\wedge \tau_n}
\end{equation}
From Girsanov Theorem the Brownian motion  in the new measure $\mathbb{P}^n$ for all $t\leq \tau_n$ is given by:
\begin{equation}\label{eq:W_n_t}
\tilde{W}_t = W_t + \theta \int_0^t h_s'\Sigma \ds
\end{equation}
and from Proposition \ref{prop:measure_change_phi}, the jump compensator becomes:
\begin{equation}\label{eq:tilde_phi}
\tilde{\phi}(z; i) = (1+h'z)^{-\theta}\phi(z; i) =\phi(z; i) + [(1+h'z)^{-\theta} -1 ]\phi(z; i)
\end{equation}
Note that $\chi_{t\wedge \tau_n}$ can be decomposed as:
\begin{equation}
\chi_{t\wedge \tau_n} = \exp(M_{t\wedge \tau_n} - A_{t\wedge \tau_n}),
\end{equation}
where
\begin{equation}
\begin{split}
M_{t\wedge \tau_n} &= -\theta \int_0^t h_s'\Sigma\dWs + \sum_{0<s\leq T}[(1+h_s'Z_s)^{-\theta}-1] \\
&- \int_0^t \int_{\mathcal{Z}^i}[(1+h'z)^{-\theta} -1]\phi(z; X_{s-}) \rd z \ds
\end{split}
\end{equation}
and
\begin{equation}
\begin{split}
A_{t\wedge \tau_n} &= \frac{\theta^2}{2} \int_0^t h_s'\Sigma \Sigma'h_s \ds\\
&+ \sum_{0<s\leq T}[(1+h_s'Z_s)^{-\theta}-1] -\log[(1+h_s'Z_s)^{-\theta}]. \\
\end{split}
\end{equation}
The elementary inequality $\log(x)\leq x-1$ for all $x>0$ implies that the process $A_t$ is non-negative. Therefore $\log(\chi_{t\wedge \tau_n})\leq M_{t\wedge \tau_n}$, and we have the bound:
\begin{equation}
 \ex{\chi_{t\wedge \tau_n} \log \chi_{t\wedge \tau_n}} \leq \ex{\chi_{t\wedge \tau_n} M_{t\wedge \tau_n}} = \mathbb{E}^n\left[ M_{t\wedge \tau_n}\right],
\end{equation}
where $\mathbb{E}^n\left[ \cdot \right]$ denotes the expectation in the $\mathbb{P}^n$ probability measure. Using (\ref{eq:W_n_t}) and (\ref{eq:tilde_phi}) 
we can write $M_{t\wedge \tau_n}$ as :
\begin{equation}
\begin{split}
M_{t\wedge \tau_n} &= -\theta \int_0^t h_s'\Sigma\rd \tilde{W}_s + \theta \int_0^t h_s'\Sigma\Sigma'h_s \ds + \sum_{0<s\leq T}[(1+h_s'Z_s)^{-\theta}-1] \\
&- \int_0^t \int_{\mathcal{Z}^i}[(1+h'z)^{-\theta} -1]\tilde{\phi}(z; X_{s-}) \rd z \ds\\
&+ \int_0^t  \int_{\mathcal{Z}^i}[(1+h'z)^{-\theta} -1 ]^2\phi(z; i)\rd z \ds,
\end{split}
\end{equation}
and so:
\begin{equation}
\begin{split}
 \mathbb{E}^n\left[ M_{t\wedge \tau_n}\right] &= \mathbb{E}^n\left[ \int_0^t \theta h_s'\Sigma\Sigma'h_s + \int_{\mathcal{Z}^i}[(1+h'z)^{-\theta} -1 ]^2\phi(z; i)\rd z \ds\right]\\
 &\leq r,
\end{split}
\end{equation}
by (\ref{eq:klebaner:bound}). This is a uniform bound, hence:
\begin{equation}
\sup_n \ex{\chi_{t\wedge \tau_n} \log \chi_{t\wedge \tau_n}} < \infty.
\end{equation}
As indicated above, this implies that the family $\{\chi_{t\wedge \tau_n}\}_{n\rightarrow \infty}$ is uniformly integrable by the de la Vall{\'e}e Poussin theorem,
which finishes the proof.


\begin{thebibliography}{}

\bibitem[\protect\astroncite{Andruszkiewicz et~al.}{2013}]{andruszkiewicz2013}
Andruszkiewicz, G., Davis, M.~H.~A., \& Lleo, S. (2013).
\newblock Taming animal spirits: risk management with behavioural factors.
\newblock {\em Annals of Finance}, 9(2):145--166.

\bibitem[\protect\astroncite{Andruszkiewicz et~al.}{2014}]{andruszkiewicz2014}
Andruszkiewicz, G., Davis, M.~H.~A. \& Lleo, S. (2014).
\newblock Animal spirits and value at risk estimation.
\newblock Available at SSRN: http://ssrn.com/abstract=2425865.

\bibitem[\protect\astroncite{Bain \& Crisan}{2009}]{bc2009}
Bain, A. \& Crisan, D. (2009).
\newblock {\em Fundamentals of Stochastic Filtering}.
\newblock Springer.

\bibitem[\protect\astroncite{B\"auerle \& Rieder}{2004}]{br2004}
B\"auerle, N. \& Rieder, U. (2004)
\newblock Portfolio optimization with Markov-modulated stock prices and interest rates.
\newblock {\em IEEE Trans. Automatic Control}, 49:442-447.


\bibitem[\protect\astroncite{Bielecki et~al.}{1999}]{bielecki1999a}
Bielecki, T., Hern{\'a}ndez-Hern{\'a}ndez, D., \& Pliska, S.~R. (1999).
\newblock Risk sensitive control of finite state markov chains in discrete
  time, with applications to portfolio management.
\newblock {\em Mathematical Methods of Operations Research}, 50(2):167--188.

\bibitem[\protect\astroncite{Bielecki \& Pliska}{1999}]{bielecki1999}
Bielecki, T. \& Pliska, S. (1999).
\newblock Risk-sensitive dynamic asset management.
\newblock {\em Applied Mathematics and Optimization}, 39:337--360.

\bibitem[\protect\astroncite{Bielecki \& Pliska}{2003}]{bielecki2003}
Bielecki, T. \& Pliska, S. (2003).
\newblock Economic properties of the risk sensitive criterion for portfolio
  management.
\newblock {\em The Review of Accounting and Finance}, 2:3--17.

\bibitem[\protect\astroncite{Birge \& Louveaux}{2011}]{bl2011}
Birge, J. \& Louveaux, F. (2011)
\newblock {\em Introduction to Stochastic Programming}.
\newblock Springer.


\bibitem[\protect\astroncite{Bouchard \& Touzi}{2011}]{bouchard2011}
Bouchard, B. \& Touzi, N. (2011).
\newblock Weak dynamic programming principle for viscosity solutions.
\newblock {\em SIAM J. Control and Optimization}, 49(3):948--962.

\bibitem[\protect\astroncite{Cont \& Tankov}{2003}]{cont2012}
Cont, R. \& Tankov, P. (2003).
\newblock {\em Financial Modelling with Jump Processes}.
\newblock Chapman \& Hall/Crc Financial Mathematics Series. Chapman \&
  Hall/CRC.

\bibitem[\protect\astroncite{Davis}{1993}]{davis1993}
Davis, M.~H.~A. (1993).
\newblock {\em Markov Models \& Optimization}.
\newblock Chapman \& Hall/CRC Monographs on Statistics \& Applied Probability.
  Taylor \& Francis.
  
\bibitem[\protect\astroncite{Davis}{1998}]{davis1998}
Davis, M.~H.~A. (1998).
\newblock Option pricing in incomplete markets.
\newblock In Dempster, M. \& Pliska, S., editors, {\em Mathematics of
  Derivative Securities}. Cambridge University Press.

\bibitem[\protect\astroncite{Davis}{2011}]{davis2011b}
Davis, M. H.~A. (2011).
\newblock Contagion models in credit risk.
\newblock In Lipton, A. \& Rennie, A., editors, {\em Oxford Handbook of Credit
  Derivatives}. Oxford University Press.

\bibitem[\protect\astroncite{Davis \& Lleo}{2011}]{mdsl2011}
Davis, M.~H.~A. \& Lleo, S. (2011).
\newblock Jump-{D}iffusion {R}isk-{S}ensitive {A}sset {M}anagement {I}:
  {D}iffusion {F}actor {M}odel.
\newblock {\em SIAM J. Fin. Math.}, 2:22-54. 


\bibitem[\protect\astroncite{Davis \& Lleo}{2013}]{davis2013}
Davis, M.~H.~A. \& Lleo, S. (2013).
\newblock Jump-{D}iffusion {R}isk-{S}ensitive {A}sset {M}anagement {II}:
  {J}ump-{D}iffusion {F}actor {M}odel.
\newblock {\em SIAM J. Control Optim.}, 51(2):1441--1480.

\bibitem[\protect\astroncite{Davis \& Lleo}{2014}]{mdsl2014}
Davis, M.~H.~A. \& Lleo, S. (2004).
\newblock {\em Risk-Sensitive Investment Management}.
\newblock World Scientific.

\bibitem[\protect\astroncite{Elliott et~al.}{2007}]{elliott2007}
Elliott, R.~J., Siu, T.~K., Chan, L., \& Lau, J.~W. (2007).
\newblock Pricing options under a generalized {M}arkov-modulated jump-diffusion
  model.
\newblock {\em Stochastic Analysis and Applications}, 25(4):821--843.

\bibitem[\protect\astroncite{2015} Escobar et al.{2015}]{enz2015}
Escobar, M., Neykova, D. \& Zagst, R. (2015).
\newblock Portfolio optimization in affine models with Markov switching.
\newblock {\em  Int. J. Theor. Appl. Finan.}, 18:1550030.


\bibitem[\protect\astroncite{Frey, Gabih \& Wunderlich}{2013}]{frey2012}
Frey, R., Gabih, A. \& Wunderlich, R. (2012).
\newblock Portfolio optimization under
  partial information with expert opinions.
\newblock {\em Int. J. Theor. \& Appl. Finance}, 15:1--18.

\bibitem[\protect\astroncite{Kaut \& Wallace}{2007}]{kw2007}
Kaut, M. \& Wallace S.W. (2007).
\newblock Evaluation of scenario-generation methods for stochastic programming
\newblock {\em Pacific Journal of Optimization}, 3:257--271.


\bibitem[\protect\astroncite{Klebaner \& Lipster}{2014}]{klebaner2011}
Klebaner, F. \& Lipster, R. (2014).
\newblock When a stochastic exponential is a true martingale. extension of a method of {B}ene{\v{s}}.
\newblock {\em Theory of Probability \& Its Applications}, 58(1):38--62.

\bibitem[\protect\astroncite{Kou}{2002}]{kou2002}
Kou, S. (2002).
\newblock A jump-diffusion model for option pricing.
\newblock {\em Management Science}, 48:1086-–1101.

\bibitem[\protect\astroncite{Kouwenberg}{2007}]{kb2001}
Kouwenberg,~R.~R.~P. (2001).
\newblock Scenario generation and stochastic programming models for asset-liability management.
\newblock {\em European Journal of Operational Research}, 134:51--64.


\bibitem[\protect\astroncite{Kuroda \& Nagai}{2002}]{kuroda2002}
Kuroda, K. \& Nagai, H. (2002).
\newblock Risk-sensitive portfolio optimization on infinite time horizon.
\newblock {\em Stochastics and Stochastic Reports}, 73(3-4):309--331.

\bibitem[\protect\astroncite{Markowitz}{1952}]{mar52}
Markowitz, H. (1976).
\newblock Portfolio selection.
\newblock {\em Journal of Finance}, 7:77--91.

\bibitem[\protect\astroncite{Merton}{1976}]{merton1976}
Merton, R.~C. (1976).
\newblock Option pricing when underlying stock returns are discontinuous.
\newblock {\em Journal of Financial Economics}, 3:125 -- 144.

\bibitem[\protect\astroncite{Pflug}{2001}]{pf2001}
Pflug, G.~C. (2001).
\newblock Scenario tree generation for multi-period financial optimization by optimal discretization.
\newblock {\em Mathematical Programming}, 89:251--271.

\bibitem[\protect\astroncite{Protter}{2005}]{protter2005}
Protter, P. (2005).
\newblock {\em Stochastic Integration and Differential Equations}.
\newblock 2nd ed., version 2.1.
\newblock Springer.

\bibitem[\protect\astroncite{Shreve}{2004}]{shreve2004}
Shreve, S.~E. (2004).
\newblock {\em Stochastic Calculus for Finance II: Continuous-Time Models}.
\newblock Springer.

\bibitem[\protect\astroncite{Sotomayor \& Cadenillas}{2009}]{sc2009}
Sotomayor, L.~R. \& Cadenillas, A. (2009)
\newblock Explicit solution of consumption-investment problems in financial markets with regime switching.
\newblock {\em Mathematical Finance}, 19:251--279.


\bibitem[\protect\astroncite{Zhao}{2010}]{zhao2010}
Zhao, L. (2010).
\newblock Portfolio selection with jumps under regime switching.
\newblock {\em International Journal of Stochastic Analysis}, vol. 2010 Article ID 697257.

\bibitem[\protect\astroncite{Ziemba}{2003}]{wtz03}
Ziemba, W.T. (2003)
\newblock {\em The stochastic programming approach to asset, liability and wealth management}.
\newblock Research Foundation of AIMR (The Association for Investment Management and Research).

\end{thebibliography}
\end{document}